\keywords{entailment relations; distributive lattices; constructive mathematics.}
\renewcommand{\leq}{\leqslant}
\renewcommand{\geq}{\geqslant}
\renewcommand{\ngeq}{\ngeqslant}
\renewcommand{\nleq}{\nleqslant}
\renewcommand{\phi}{\varphi}
\newcommand{\pfin}{\mathcal{P}_\mathrm{fin}}
\theoremstyle{plain} %\crefname{satz}{Satz}{S\"atze}
\newtheorem*{ct}{Completeness theorem (CT)}
\begin{document}

\title[Extension by Conservation. Sikorski's Theorem]{Extension by Conservation. Sikorski's Theorem}
% \titlecomment{{\lsuper*}OPTIONAL comment concerning the title, \eg,
%   if a variant or an extended abstract of the paper has appeared elsewhere.}

\author[D.~Rinaldi]{Davide Rinaldi}	%required
\author[D.~Wessel]{Daniel Wessel}	%optional
\address{Universit\`a di Verona, Dipartimento di Informatica,
Strada le Grazie 15, 37134 Verona, Italy}	%required
\email{daviderinaldimath@gmail.com}  %optional
\email{daniel.wessel@univr.it}  %optional
%\thanks{thanks 1, optional.}	%optional

% \thanks{thanks 2, optional.}	%optional

% \author[C.~Name3]{Carla Name3}	%optional
% \address{address 3}	%optional
% \urladdr{name3@url3\quad\rm{(optionally, a web-page can be specified)}}  %optional
% \thanks{thanks 3, optional.}	%optional

%% etc.

%% required for running head on odd and even pages, use suitable
%% abbreviations in case of long titles and many authors:

%%%%%%%%%%%%%%%%%%%%%%%%%%%%%%%%%%%%%%%%%%%%%%%%%%%%%%%%%%%%%%%%%%%%%%%%%%%

%% the abstract has to PRECEDE the command \maketitle:
%% be sure not to issue the \maketitle command twice!

\begin{abstract}
  \noindent Constructive meaning is given to the assertion that every finite Boolean algebra
  is an injective object in the category of distributive lattices.
  To this end, we employ Scott's notion of entailment relation,
  in which context we describe Sikorski's extension theorem for finite Boolean algebras
  and turn it into a syntactical conservation result.
  As a by-product, we can facilitate proofs of related classical principles.
\end{abstract}

\maketitle

\section{Introduction}

Due to a time-honoured result by Sikorski (see, e.g., \cite[\S 33]{sikorski:ba} and \cite{halmos:loba}),
the injective objects in the category of Boolean algebras have been identified precisely as the complete Boolean algebras.
In other words, a Boolean algebra $C$ is complete if and only if,
for every morphism $f : B \rightarrow C$ of Boolean algebras, where $B$ is a subalgebra of $B'$,
there is an extension $g : B' \rightarrow C$ of $f$ onto $B'$.
%
%
%Due to a time-honoured result by Sikorski (see, e.g., \cite[\S 33]{sikorski:ba} and \cite{halmos:loba}),
%a Boolean algebra is injective in the category of Boolean algebras if and only if it is complete.
%%
%%
%%
%%A time-honoured result due to Sikorski (see, e.g., \cite[\S 33]{sikorski:ba} and \cite{halmos:loba}) identifies
%%complete Boolean algebras
%%as injective objects in the category of Boolean algebras.
%In other words,
%a morphism $B \rightarrow C$ of Boolean algebras, where $B$ is a subalgebra of $B'$,
%has an extension onto $B'$ if and only if $C$ is complete.
More generally, it has later been shown by Balbes \cite{balbes:proin}, and Banaschewski and Bruns \cite{banaschewski:inhulls},
that a distributive lattice is an injective object in the category of distributive lattices
if and only if it is a complete Boolean algebra.
%that every complete Boolean algebra, more generally,
%is injective among distributive lattices.

A popular proof of Sikorski's theorem proceeds as follows:
by Zorn's lemma the given morphism on $B$ has a maximal partial extension,
which by a clever one-step extension principle \cite{halmos:loba,bell:pers}
can be shown to be total on $B'$.
%But not only is Sikorski's theorem a consequence of Zorn's lemma.
In turn, instantiating $C$ with the initial Boolean algebra $\mathbf{2}$ with two elements
results in a classical equivalent of the Boolean prime ideal theorem,
a proper form of the axiom of choice \cite{rubin:equivalents}.
Taking up a strictly constructive stance,
it is even necessary to object to completeness of $\mathbf{2}$,
as this implies the principle of weak excluded middle \cite{bell:ist}.
We recall further that while in classical set theory Sikorski's theorem is stronger than the Boolean prime ideal theorem \cite{bell:strength},
the latter principle is in fact equivalent to the restricted form of Sikorski's theorem
for complete and atomic Boolean algebras \cite{luxemburg:sik}.

We can give constructive meaning to Sikorski's extension theorem for finite Boolean algebras
by phrasing it as a syntactical conservation result.
The idea is as follows.
Given a distributive lattice $L$
and a finite discrete Boolean algebra $B$,
we generate an \emph{entailment relation} \cite{scott:engender,sco:com,sco:bac}
the models of which are precisely the lattice maps $L \rightarrow B$
(preliminaries will be laid out in Section \ref{sec:er}).
This can be done in a canonical manner,
and in particular so with every lattice $L'$ containing $L$ as a sublattice.
Then we have two entailment relations $\vdash$ and $\vdash'$,
the former of which can be \emph{interpreted} in the latter.
By way of a suitable form of the axiom of choice, with
some classical logic,
%By way of the \emph{completeness theorem} for entailment relations \cite{coquand:erdl,coquand:topandsc,coquand:sfc},
this interpretation being \emph{conservative} is tantamount to the
restriction of lattice maps $L' \rightarrow B$ to the sublattice $L$ being surjective---which is extendability.
The proof of conservativity, however,
is elementary and constructive,
and it rests upon the explicit characterization of $\vdash$
in terms of an appropriate \emph{Formal Nullstellensatz} (see Section \ref{sec:nsts}).

We hasten to add that this approach does not stem from an altogether new idea.
It is quite in order to cite Mulvey and Pelletier \cite[p. 41]{mulvey:globhb},
who grasp the ``\emph{intuitive content which the Hahn-Banach theorem normally brings to functional analysis}'' in view of that

\begin{quote}
``[...] \emph{no more may be proved about the subspace $A$
in terms of functionals on the seminormed space $B$
than may already be proved by considering only functionals on the subspace $A$.}''
\end{quote}

In this spirit, the Hahn-Banach theorem has been revisited by way of type theory \cite{negri:hbintt},
as well as in terms of entailment relations \cite{coquand:erdl,coquand:geomhb,coquand:dlhb}.
Further approaches to extension theorems in the guise of logical conservation
include Szpilrajn's order extension principle \cite{negri:ptorder}.
The idea of capturing algebraic structures
and in particular their ideal objects
by way of entailment relations
has been developed and advocated in \cite{coquand:erdl,coquand:dlhb,coquand:vdp}.
The interplay of (multi-conclusion) entailment relations as
extending their single-conclusion counterpart is subject of \cite{rsw:edde:abstract,rsw:edde},
of course building on \cite{sco:com}.
Clearly, we draw inspiration from these references.

\paragraph{On method and foundations}
If not explicitly mentioned otherwise,
throughout we reason constructively.
All what follows may be formalized in a suitable fragment of constructive Zermelo-Fraenkel set theory ($\mathbf{CZF}$)
\cite{aczel:notes,aczel:cstdraft}.
However, in order to bridge the gap towards the classical counterpart of our conservation result,
we need to employ a version of the completeness theorem ($\mathbf{CT}$) for entailment relations,
which in fact is a form of the axiom of choice,
and to invoke the law of excluded middle to point out some well-known classical consequences.
In these cases we refer to classical set theory ($\mathbf{ZFC}$)
and this will be indicated appropriately.
Recall that a set $S$ is said to be \emph{discrete} if equality on the set is decidable, which is to say that
$\forall x, y \in S\, (\, x = y \lor x \neq y\, )$,
and a subset $U$ of $S$ is called \emph{detachable} if
$\forall x \in S\, (\, x \in U \lor x \notin U\, )$.
A set $S$ is \emph{finite} if there exists $n \in \mathbb{N}$
and a surjective function $f : \set{1, \dots, n} \rightarrow S$.
The class of finite subsets of $S$ is denoted $\mathcal{P}_\mathrm{fin}(S)$,
which forms a set in $\mathbf{CZF}$.
Again, we refer to \cite{aczel:notes,aczel:cstdraft}.

%%%
%%%
%%%
\section{Entailment relations}\label{sec:er}
%%%
%%%
%%%

Let $S$ be a set.
An \emph{entailment relation} \cite{sco:com,scott:engender,sco:bac} on $S$ is a relation
\[
\vdash\ \subseteq \pfin(S) \times \pfin(S)
\]
between \emph{finite} subsets of $S$,
which is \emph{reflexive}, \emph{monotone}, and \emph{transitive}
in the following sense,
written in rule notation:
%
%which satisfies the following properties of
%\emph{reflexivity}, \emph{monotonicity} (or \emph{weakening}, or \emph{dilution}),
%and \emph{transitivity} (or \emph{cut}),
\[
\infer[(\mathrm{R})]{X \vdash Y}{X \between Y}
\qquad\qquad
\infer[(\mathrm{M})]{X, X' \vdash Y, Y'}{ X \vdash Y}
\qquad\qquad
\infer[(\mathrm{T})]{X \vdash Y}{X \vdash Y,x \quad X,x \vdash Y}
\]
Transitivity (T) is an abstract form of Gentzen's \emph{cut} rule for sequent calculus.
%and (M) is also referred to as \emph{weakening}.
In (R) the notation $X \between Y$ means that $X$ and $Y$ have an element in common.\footnote{We
have borrowed this notation from formal topology \cite{sambin:somepoints}.}
%\footnote{We have adopted this notation from Giovanni Sambin.}
We write $X,Y$ rather than $X \cup Y$,
as well as just $x$ where it actually should read a singleton set $\set{x}$,
but this is a matter of readability only.
We refer to the elements of $S$ as \emph{abstract statements};
one might even call them formulae,
even though $S$ need not consist of syntactic objects per se,
that is to say, formulae in the sense specified by a certain formal language.
As regards intuition, entailment is to be read just as a Gentzen sequent:
conjunctively on the left, and disjunctively
on the right hand side of the turnstile symbol $\vdash$.

Here is an important and recurring example.
If $L$ is a distributive lattice,
then a natural choice for an entailment relation on $L$,
where every element is considered to be an abstract statement,
is given by
\[
X \vdash Y
\qquad
\text{if and only if}
\qquad
\bigwedge X \leq \bigvee Y.
\]
While reflexivity and monotonicity trivially hold,
the simple proof that $\vdash$ is transitive
rests on distributivity of $L$ (in fact, cut is equivalent to distributivity \cite{sco:com}),
see, e.g., \cite[XI, \S4.5]{lombardiquitte:constructive}.

%In presence of $(\mathrm{M})$, the cut rule as presented above is equivalent to
%\[
%\infer[(\mathrm{T}')]{ X, X' \vdash Y, Y' }{ X \vdash Y, x \quad X', x \vdash Y' }
%\]
Notice that the definition of entailment relation is symmetric:
if $\vdash$ is an entailment relation, then so is its converse $\dashv$.
The \emph{trivial} entailment relation on a set $S$ is the one for which $X \vdash Y$
for \emph{every} pair $X,Y$ of finite subsets of $S$.
Besides, apart from being trivial,
this is the largest entailment relation on $S$.
If there is at least one pair $X,Y$ between which entailment cannot be inferred,
then we say that $\vdash$ is \emph{non-trivial}.

Given a set $\mathcal{E}$ of pairs of finite subsets of $S$,
we may consider the entailment relation \emph{generated by} $\mathcal{E}$.
This is the closure of $\mathcal{E}$ under the rules (R), (M) and (T),
and thus is the least entailment relation on $S$ to contain $\mathcal{E}$.
An important point about generated entailment relations concerns proof technique.
For instance,
if we want to show that all instances of an entailment relation $\vdash$ share a certain property,
then we may argue inductively,
showing the property in question first to hold for axioms,
then going through the rules,
which allows to apply the hypothesis accordingly.

%An entailment $X \vdash Y$ is to be read conjunctively on the left,
%and as a disjunction of the right hand side of the turnstile.

%A set $S$ together with an entailment relation is called \emph{sequent structure} in \cite{coquand:sfc}.

\begin{defi}
Let $\vdash$ be an entailment relation on a set $S$.
An \emph{ideal element} (or \emph{model}, \emph{point}) \cite{coquand:erdl,coquand:sfc}
of $\vdash$ is a subset $\alpha \subseteq S$ which splits entailment,
i.e.,
\[
\infer{\alpha \between Y}{\alpha \supseteq X \quad X \vdash Y}
\]
\end{defi}

We write $\mathfrak{Spec}(\vdash)$ for the collection of ideal elements for $\vdash$.
For instance, overlap $\between$
is an entailment relation on $S$ if restricted to hold between finite subsets of $S$ only,
in which case the ideal elements are arbitrary subsets,
thus $\mathfrak{Spec}(\between) = \mathcal{P}(S)$.\footnote{Hence
$\mathfrak{Spec}(\vdash)$ might well be a proper class,
as $\mathbf{CZF}$ does not employ the axiom of powerset!
However, neither will this be an issue, nor does it pose a problem:
$\mathbf{CZF}$ covers the axiom of exponentiation \cite{aczel:cstdraft},
and for what we have in mind, this suffices.
Moreover, once we are in need of completeness (see below),
we will find it necessary to switch to a classical setting, anyway.}

We say that a finite set $X$ of statements is \emph{inconsistent}
if $X \vdash$.
By monotonicity, if $X$ is inconsistent,
then $X \vdash Y$ for every finite subset $Y$ of $S$.
Notice that an ideal element cannot have an inconsistent subset.

Often, working towards a non-inductive description of an inductively
defined entailment relation, i.e., to what is sometimes called a \emph{formal Nullstellensatz}
for the entailment relation at hand,
it is preferable to concentrate first on desribing its empty-conclusion instances.
Lemma \ref{lem:incpred} below,
which is a consequence of cut elimination for entailment relations \cite{rinaldiwessel:cut},
can be helpful to this end;
we will give an \emph{ad hoc} argument.
In the following,
we identify subsets $\Phi$ of $\mathcal{P}_\mathrm{fin}(S)$ with their defining predicate, i.e.,
we write $\Phi(X)$ as a shorthand to indicate membership $X \in \Phi$.
% writing
% \[
% \mathrm{Inc}(X)\, \equiv\, X \in \mathrm{Inc}\, \equiv\,X \vdash.
% \]
% in order to indicate inconsistency.
We say that $\Phi$ is (upward) \emph{monotone} if $\Phi(X)$ and $X \subseteq Y$ imply $\Phi(Y)$,
where of course $X,Y \in \mathcal{P}_\mathrm{fin}(S)$.
\begin{lem}\label{lem:incpred}
Let $\Phi$ be a family of finite inconsistent subsets of $S$.
% Let $\Phi \subseteq \mathrm{Inc}_\vdash$ be upward monotone, i.e., if $X,Y \in \mathcal{P}_\mathrm{fin}(S)$ are such
% that $\Phi(X)$ and $X \subseteq Y$, then $\Phi(Y)$.
The following are equivalent.
% i.e., if $X \subseteq Y$ and $\Phi(X)$, then $\Phi(Y)$.
\begin{enumerate}
  \item $\Phi \supseteq \Set{ X \in \mathcal{P}_\mathrm{fin}(S) : X \vdash }$.
  \item\label{incpred:resolve}
  % $\forall X,Y,Z \in \mathcal{P}_\mathrm{fin}(S)\,(\, X \vdash Y\land\forall y \in Y\, \Phi(y,Z)\,\rightarrow\, \Phi(X,Z)\,)$.
  The rule
  \[
  \infer{\Phi(x_1, \dots, x_k,Z)}{x_1, \dots, x_k \vdash y_1, \dots, y_\ell \qquad \Phi(y_1,Z) \quad \dots \quad \Phi(y_\ell,Z)}
  \]
  is provable.
\end{enumerate}
% Then $\Phi = \mathrm{Inc}$ if and only if
% % \[
% % \infer{\Phi(X,Z)}{X \vdash Y\qquad \forall y \in Y\, \Phi(y,Z)}
% % \]
% \[
% \forall X,Y,Z \in \mathcal{P}_\mathrm{fin}(S)\,
% (\, X \vdash Y\land\forall y \in Y\, \Phi(y,Z)\,\rightarrow\, \Phi(X,Z)\,).
% \]
If $\Phi$ is monotone and $\vdash$ is inductively defined,
then it suffices in (\ref{incpred:resolve}) to consider generating entailments only.
\end{lem}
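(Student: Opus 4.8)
The plan is to establish the two implications of the equivalence by a direct argument, and then to read off the final refinement by an induction on the generation of $\vdash$. The implication from the rule to $(1)$ is immediate: given $X \in \pfin(S)$ with $X \vdash$, write $X = \{x_1, \dots, x_k\}$ and instantiate the rule with $\ell = 0$ and $Z = \emptyset$; its only premise $x_1, \dots, x_k \vdash$ holds, whence $\Phi(X)$.

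For the converse, assume $(1)$ together with $x_1, \dots, x_k \vdash y_1, \dots, y_\ell$ and $\Phi(y_i, Z)$ for every $i$. Since each member of $\Phi$ is inconsistent, $y_i, Z \vdash$ for every $i$. Starting from $x_1, \dots, x_k \vdash y_1, \dots, y_\ell$ and adjoining $Z$ on the left by (M), I would cut the $y_i$ away one after another — at each step using (M) to bring $y_i, Z \vdash$ into the form $x_1, \dots, x_k, Z, y_i \vdash y_{i+1}, \dots, y_\ell$ and then applying (T) — until nothing remains on the right, i.e., $x_1, \dots, x_k, Z \vdash$. Thus $\{x_1, \dots, x_k\} \cup Z$ is inconsistent, and $(1)$ yields $\Phi(x_1, \dots, x_k, Z)$.

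For the final clause, suppose moreover that $\Phi$ is monotone and that $\vdash$ is generated by a set $\mathcal{E}$, and assume the rule holds whenever the entailment in its premise belongs to $\mathcal{E}$. I would show, by induction on the generation of $\vdash$, that the rule then holds for arbitrary $X \vdash Y$, i.e., that $(2)$ follows in full; combined with $(2) \Rightarrow (1)$ above, this shows the restricted form to be already equivalent to $(1)$ and $(2)$. The property carried through the induction is: for every finite $Z$, if $\Phi(\{y\} \cup Z)$ for all $y \in Y$, then $\Phi(X \cup Z)$. The axiom case is the standing assumption verbatim. For (R), a common element $z \in X \cap Y$ gives $\Phi(\{z\} \cup Z)$, hence $\Phi(X \cup Z)$ by monotonicity of $\Phi$. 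For (M) one applies the induction hypothesis to the smaller sequent with the same $Z$ and enlarges once more using monotonicity of $\Phi$.

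The delicate case is cut (T), say $X \vdash Y$ derived from $X \vdash Y, x$ and $X, x \vdash Y$. Applying the induction hypothesis to $X, x \vdash Y$ with side set $Z$ gives $\Phi(X \cup \{x\} \cup Z)$; then — this is the point — one applies the induction hypothesis to $X \vdash Y, x$ not with $Z$ but with the enlarged context $X \cup Z$, for which the premises $\Phi(\{y\} \cup X \cup Z)$ (for $y \in Y$) and $\Phi(\{x\} \cup X \cup Z)$ follow by monotonicity of $\Phi$ from what is already available, so that one obtains $\Phi(X \cup (X \cup Z)) = \Phi(X \cup Z)$. Recognising that the auxiliary set must be grown to $X \cup Z$ in the cut step — and that this enlargement is exactly what forces monotonicity of $\Phi$ into the hypotheses — is the main obstacle; everything else is routine bookkeeping with (R), (M) and (T).
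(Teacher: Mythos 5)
Your proof is correct and follows essentially the same route as the paper's: $(2)\Rightarrow(1)$ by instantiating with $Z=\emptyset$ and $\ell=0$; $(1)\Rightarrow(2)$ by using inconsistency of the members of $\Phi$ to cut the $y_i$ away via (T); and the add-on by induction on the generation of $\vdash$. Your handling of the cut case---enlarging the side set to $X\cup Z$ and feeding in monotonicity of $\Phi$---is precisely the content the paper compresses into ``readily shown equivalent by an inductive argument.''
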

\begin{proof}
Inferring the second item from the first is merely a matter of transitivity (T).
For the converse, i.e.,
to obtain $\Phi(X)$ from $X \vdash$ given (\ref{incpred:resolve}),
simply instantiate the latter with $Z = \emptyset$.
As regards the add-on, if $\vdash$ is inductively generated from a set $\mathcal{E}$ of initial entailments,
then by an inductive argument
% \[
% \forall X,Y,Z \in \mathcal{P}_\mathrm{fin}(S)\,(\, (X,Y)\in\mathcal{E}\land\forall y \in Y\, \Phi(y,Z)\,\rightarrow\, \Phi(X,Z)\,)
% \]
\[
\infer{\Phi(X,Z)}{(X,Y)\in \mathcal{E} \quad \forall y \in Y\,\Phi(y,Z)}
\]
is readily shown equivalent to (\ref{incpred:resolve}).
\end{proof}

If $\vdash$ and $\vdash'$ are two entailment relations on sets $S$ and $S'$, respectively,
%$(S, \vdash)$ and $(S', \vdash')$ are two sequent structures,
then an \emph{interpretation} \cite{coquand:topandsc,coquand:sfc} of the former is a function $f : S \rightarrow S'$ such that
$X \vdash Y$ implies $f(X) \vdash' f(Y)$.
%\[
%X \vdash Y \quad \implies \quad f(X) \vdash' f(Y).
%\]
An interpretation is said to be \emph{conservative} if on the other hand
$f(X) \vdash' f(Y)$ implies $X \vdash Y$.
%\[
%f(X) \vdash' f(Y) \quad \implies \quad X \vdash Y.
%\]
Every interpretation $f : (S, \vdash) \rightarrow (S', \vdash')$ induces a mapping
\[
f^{-1} : \mathfrak{Spec}(\vdash') \rightarrow \mathfrak{Spec}(\vdash), \quad \alpha \mapsto f^{-1}(\alpha)
\]
of ideal elements.
In fact, the property of $f$ being an interpretation ensures that the inverse image mapping
$f^{-1} : \mathcal{P}(S) \rightarrow \mathcal{P}(S)$
restricts on ideal elements.

According to the \emph{Fundamental Theorem of Entailment Relations} \cite[Theorem 3]{coquand:erdl}
every entailment relation generates a distributive lattice
with conservative universal interpretation.
That is, if $(S, \vdash)$ is a set together with an entailment relation $\vdash$,
then there is a distributive lattice $L(S)$,
together with a map $i : S \rightarrow L(S)$
such that
\[
X \vdash Y
\qquad
\textrm{if and only if}
\qquad
\bigwedge_{x \in X} i(x) \leq \bigvee_{y \in Y} i(y), \tag{$\ast$}
\]
and whenever $f : S \rightarrow L'$ is an interpretation of $\vdash$ in another distributive lattice $L'$,
i.e., satisfying the left-to-right implication of $(\ast)$ with $f$ in place of $i$,
then there is a lattice map $f' : L(S) \rightarrow L'$ such that the following diagram commutes
\[
\begin{tikzcd}
S \arrow{r}{i} \arrow{rd}[swap]{f}		&		L(S) \arrow[dashed]{d}{f'}	\\
~				&		L'
\end{tikzcd}
\]
This has several important applications,
and provides connections with point-free topology~\cite{coquand:erdl}.
Generated lattices can further be used \cite{coquand:erdl,coquand:sfc} to verify the

\begin{ct}
The following are equivalent.
\begin{enumerate}

\item
$X \vdash Y$

\item
$\forall \alpha \in \mathfrak{Spec}(\vdash)\ (\ \alpha \supseteq X\ \rightarrow\ \alpha \between Y\ )$

\end{enumerate}
%
%\[
%X \vdash Y \qquad \text{if and only if} \qquad \forall \alpha \in \mathfrak{Spec}(\vdash)\ (\ \alpha \supseteq X\ \rightarrow\ \alpha \between Y\ ).
%\]
\end{ct}

The completeness theorem for entailment relations
is implied by spatiality of coherent frames \cite{coquand:sfc},
hence is a form of the axiom of choice.
Completeness will be used to derive classical consequences
from the formal Nullstellensatz (Theorem \ref{hilbert}).

We end this section with an important consequence of completeness,
crucial for our purpose.
It is taken from \cite{coquand:topandsc}.
%We require a bridge between the constructive version of Sikorski's extension theorem
%for finite Boolean algebras,
%which we are about to propose,
%and its classical correspondent.
%This is taken from .

\begin{thm}[``Conservativity = Surjectivity'', \textbf{ZFC}]\label{thm:conssurj}
If $f: (S, \vdash) \rightarrow (S', \vdash')$ is an interpretation of entailment relations,
then $f$ is conservative if and only if the induced mapping $f^{-1}$ is surjective on ideal elements.
\end{thm}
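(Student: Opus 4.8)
The plan is to prove the two implications separately, relying on the completeness theorem (CT) throughout, since we are working in $\mathbf{ZFC}$.

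For the direction \emph{conservativity $\Rightarrow$ surjectivity of $f^{-1}$}, I would start with an arbitrary ideal element $\alpha \in \mathfrak{Spec}(\vdash)$ and aim to produce a $\beta \in \mathfrak{Spec}(\vdash')$ with $f^{-1}(\beta) = \alpha$. The natural candidate is to take the entailment relation $\vdash'$ and adjoin, as new axioms, the statements $\emptyset \vdash' f(x)$ for $x \in \alpha$ and $f(y) \vdash' \emptyset$ for $y \notin \alpha$; equivalently, one forces the image of $\alpha$ to be ``true'' and the image of its complement to be ``false''. One must check this augmented relation $\vdash'_\alpha$ is still \emph{non-trivial}, for then by completeness (CT) it has an ideal element $\beta$, and by construction $f(x) \in \beta$ for all $x \in \alpha$ and $f(y) \notin \beta$ for all $y \notin \alpha$, i.e.\ $f^{-1}(\beta) = \alpha$. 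Non-triviality is where conservativity enters: if $\vdash'_\alpha$ were trivial, then in particular $\emptyset \vdash'_\alpha \emptyset$, and unwinding the closure under (R), (M), (T) this forces some entailment $f(X) \vdash' f(Y)$ with $X \subseteq \alpha$ and $Y \cap \alpha = \emptyset$; conservativity then yields $X \vdash Y$, whence $\alpha \between Y$ since $\alpha \supseteq X$ is an ideal element, contradicting $Y \cap \alpha = \emptyset$. Making the ``unwinding'' step precise — extracting from a derivation of inconsistency in $\vdash'_\alpha$ a single offending $\vdash'$-entailment between subsets of $f(\alpha)$ and $f(S \setminus \alpha)$ — is the main obstacle, and I expect to handle it by an induction on the generation of $\vdash'_\alpha$, or more slickly by appealing to a Nullstellensatz-style description of the empty-conclusion instances in the spirit of Lemma~\ref{lem:incpred}.

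For the converse, \emph{$f^{-1}$ surjective $\Rightarrow$ $f$ conservative}, suppose $f(X) \vdash' f(Y)$ and yet $X \nvdash Y$. By the completeness theorem (CT) applied to $\vdash$, there is an ideal element $\alpha \in \mathfrak{Spec}(\vdash)$ with $\alpha \supseteq X$ but $\alpha \between Y$ false, i.e.\ $Y \cap \alpha = \emptyset$. By surjectivity of $f^{-1}$ there is $\beta \in \mathfrak{Spec}(\vdash')$ with $f^{-1}(\beta) = \alpha$. Then $f(X) \subseteq \beta$ because $X \subseteq \alpha = f^{-1}(\beta)$, so since $\beta$ splits $\vdash'$ and $f(X) \vdash' f(Y)$, we get $\beta \between f(Y)$, say $f(y) \in \beta$ for some $y \in Y$. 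But then $y \in f^{-1}(\beta) = \alpha$, contradicting $Y \cap \alpha = \emptyset$. Hence $X \vdash Y$, as required. This direction is essentially a diagram chase through the definitions and should go through smoothly once CT is invoked.

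In writing this up I would present the surjectivity-implies-conservativity direction first, as it is the cleaner one, and then treat the delicate non-triviality argument. A cleaner packaging of the hard direction may be available by first establishing that $f^{-1}$ is surjective precisely when every ideal element of $\vdash$ extends along $f$, and then observing that the obstruction to such an extension is exactly a derivation witnessing triviality of $\vdash'_\alpha$; conservativity rules this out via the splitting property of $\alpha$. Either way, the conceptual content is that $f^{-1}$ surjective, $f$ conservative, and ``$\vdash'_\alpha$ non-trivial for every $\alpha \in \mathfrak{Spec}(\vdash)$'' are all equivalent under (CT), and the theorem records the first two.
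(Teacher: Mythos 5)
Your argument is correct. Note first that the paper does not actually prove Theorem~\ref{thm:conssurj}: it is stated with a bare attribution to \cite{coquand:topandsc}, so there is no in-text proof to compare against; you are supplying the standard argument. Your ``surjectivity $\Rightarrow$ conservativity'' direction is a clean diagram chase and needs no further comment. For the other direction, the one step you rightly single out---extracting from $\emptyset \vdash'_\alpha \emptyset$ a single entailment $f(X) \vdash' f(Y)$ with $X \subseteq \alpha$ and $Y \cap \alpha = \emptyset$---does go through by the induction you propose: the relation ``$U \Vdash V$ iff there are finite $A_0 \subseteq f(\alpha)$ and $B_0 \subseteq f(S \setminus \alpha)$ with $U, A_0 \vdash' V, B_0$'' contains $\vdash'$ and the new axioms and is closed under (R), (M), (T) (for (T) one pads both premisses by monotonicity and cuts), hence contains $\vdash'_\alpha$; one then chooses preimages in $\alpha$ and in $S \setminus \alpha$ for the elements of $A_0$ and $B_0$ respectively, which is harmless even when $f$ is not injective. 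Two small points of hygiene: the passage from non-triviality of $\vdash'_\alpha$ to the existence of an ideal element is exactly the instance $\emptyset \nvdash'_\alpha \emptyset$ of (CT) read classically, and any ideal element of $\vdash'_\alpha$ is \emph{a fortiori} one of $\vdash'$ since $\vdash' \subseteq \vdash'_\alpha$; with those spelled out, the proof is complete.
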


%%%
%%%
%%%
\section{Sikorski by entailment}
%%%
%%%
%%%

%%%
%%%
\subsection{On lattices and Boolean algebras}\label{onlattices}
%%%
%%%

In the following,
every lattice $L$ is considered to be distributive, and bounded,
i.e., to have a top and bottom element, $1_L$ and $0_L$, respectively.
Subscripts will be written in order to emphasize to which lattice we refer,
otherwise they will be omitted. %\marginpar{rather as tuples? ... mention also decidability}
We understand $1$ to be the empty meet
and $0$ to be the empty join in $L$.
Maps between lattices are meant to preserve structure.
Mind that a lattice is discrete if and only if its order relation $\leq$ is decidable.
An \emph{atom} of a lattice $L$ is an element $e \in L$ which is minimal among non-zero elements,
i.e., for every $x \in L$, if $0 < x \leq e$, then $x = e$;
of course $x < y$ is shorthand for $x \leq y$ and $x \neq y$.
The set of all atoms of $L$ will be denoted $\mathrm{At}\, L$.

We will be dealing with finite discrete Boolean algebras only,
for which there is the following \emph{Structure Theorem}
\cite[VII, \S 3, 3.3 Theorem]{lombardiquitte:constructive}:

\begin{thm}%[Structure Theorem]
Every finite discrete Boolean algebra is isomorphic to the algebra of the detachable
subsets of a finite discrete set.
%\begin{enumerate}
%\item
%Every finite Boolean algebra is isomorphic to the algebra of the detachable
%subsets of a finite set.
%\item
%More precisely, for a Boolean algebra $C$ the following properties are equivalent.
%\begin{enumerate}
%\item
%$C$ is finite.
%\item
%$C$ is discrete and finitely generated.
%\item
%The set $S$ of atoms is finite, and $1_C$ is the join of this set.
%\end{enumerate}
%In such a case $C$ is isomorphic to the Boolean algebra $\mathcal{P}_{\mathrm{fin}}(S)$.
%\end{enumerate}
\end{thm}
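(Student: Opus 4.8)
The plan is to prove the Structure Theorem by exhibiting the required isomorphism explicitly, taking the finite discrete set to be $\mathrm{At}\, B$, the set of atoms of the given finite discrete Boolean algebra $B$. First I would check that $\mathrm{At}\, B$ is itself finite and discrete: finiteness follows because $\mathrm{At}\, B$ is a detachable subset of the finite discrete set $B$ (for each $x \in B$, deciding whether $x$ is an atom reduces to deciding $x = 0_B$ and, for the finitely many $y$ with $y \leq x$, deciding $y = 0_B$ and $y = x$, all of which are decidable since $B$ is discrete with decidable order); and a detachable subset of a finite discrete set is again finite and discrete. Then I would define $\phi : B \to \mathcal{P}(\mathrm{At}\, B)$ by $\phi(x) = \set{ e \in \mathrm{At}\, B : e \leq x }$, noting that this subset is detachable in $\mathrm{At}\, B$ by decidability of $\leq$.

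The key algebraic step is to establish the two facts that make a finite discrete Boolean algebra ``atomic'': (i) every nonzero $x \in B$ lies above some atom, and (ii) every $x \in B$ is the join of the (finitely many) atoms below it, i.e.\ $x = \bigvee \phi(x)$. For (i) I would argue by induction on the size of the detachable set $\set{ y \in B : 0_B < y \leq x }$: if $x$ is not itself minimal among nonzero elements, pick a witnessing $y$ with $0_B < y < x$ and recurse, the relevant sets strictly shrinking. For (ii), given $x$, let $x' = \bigvee \phi(x) \leq x$; then $x \wedge \neg x' \leq x$ and if it were nonzero it would, by (i), lie above some atom $e$, whence $e \leq x$ so $e \leq x'$, yet also $e \leq \neg x'$, forcing $e = e \wedge x' \wedge \neg x' = 0_B$, a contradiction; hence $x \wedge \neg x' = 0_B$, which in a Boolean algebra gives $x \leq x'$, so $x = x'$. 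Complementation is used here in an essential way, and this is where I expect the main work to lie.

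With (i) and (ii) in hand, I would verify that $\phi$ is a lattice isomorphism onto the Boolean algebra of detachable subsets of $\mathrm{At}\, B$. Preservation of $\wedge$ and the bounds is immediate from the definition; preservation of $\vee$ follows because an atom below $x \vee y$ must, being minimal, satisfy $e = e \wedge (x \vee y) = (e \wedge x) \vee (e \wedge y)$, and decidability lets us conclude $e \wedge x = e$ or $e \wedge y = e$, i.e.\ $e \leq x$ or $e \leq y$. Injectivity of $\phi$ is exactly fact (ii): if $\phi(x) = \phi(y)$ then $x = \bigvee \phi(x) = \bigvee \phi(y) = y$. Surjectivity: given a detachable $A \subseteq \mathrm{At}\, B$, which is finite, set $x = \bigvee A$; then $\phi(x) \supseteq A$ trivially, and conversely any atom $e \leq \bigvee A$ lies in $A$ by the same case analysis used for $\vee$-preservation (iterating over the finitely many elements of $A$), so $\phi(x) = A$. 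Finally a lattice isomorphism between bounded distributive lattices that happen to be Boolean algebras automatically preserves complements, so $\phi$ is an isomorphism of Boolean algebras. The one point demanding care throughout is that every appeal to ``choose an atom'' or ``case split'' must be backed by the discreteness and decidable-order hypotheses, so no choice or excluded middle sneaks in; the inductions on sizes of detachable subsets of the finite set $B$ are what keep the argument constructive.
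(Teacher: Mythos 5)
The paper does not actually prove this theorem: it is imported verbatim from Lombardi and Quitt\'e \cite[VII, \S 3, 3.3 Theorem]{lombardiquitte:constructive}, so there is no in-paper argument to compare against. Your proof is a correct, self-contained constructive proof along the standard lines (atoms as the underlying finite discrete set, $\phi(x) = \set{e \in \mathrm{At}\,B : e \leq x}$), and it is essentially the argument one finds in the cited source. The steps that matter constructively are all handled: detachability of $\mathrm{At}\,B$ and of each $\phi(x)$ via decidability of $\leq$; the bounded induction on the cardinality of $\set{y : 0 < y \leq x}$ for ``every nonzero element dominates an atom''; and the atom dichotomy $e \leq a$ or $e \wedge a = 0$, which the paper itself records as a consequence of discreteness and which you correctly use for $\vee$-preservation and surjectivity. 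One point worth making explicit: in step (ii) you derive $x \wedge \neg x' = 0_B$ by refuting $x \wedge \neg x' \neq 0_B$, i.e.\ by double negation elimination on an equality; this is legitimate only because equality in $B$ is decidable, so it deserves a sentence saying so --- it is the one place where discreteness does genuine logical work rather than merely licensing a finite case split. With that remark added, the proof is complete and fits the paper's foundational setting ($\mathbf{CZF}$, no excluded middle, no choice).
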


Crucially,
every finite discrete Boolean algebra $B$ has a finite set of atoms,
and $1_B = \bigvee \mathrm{At}\, B$.
It follows that every element of $B$ is a finite join of atoms.
We refer to \cite[VII, \S 3]{lombardiquitte:constructive}.

There are several classically equivalent ways of describing atoms in a Boolean algebra \cite{handbook:BA1}.
They coincide, however, under assumption of discreteness.
%
%As the Boolean algebras under consideration are discrete
%Under assumption of discreteness \cite{}.
%In particular, $e \in \mathrm{At}\, B$ if and only if
%$e > 0$ and, for every $a \in B$, $e \leq a$ or $e \leq -a$.
%
%
%%
In particular, for every $e \in B$ the following are equivalent \cite{handbook:BA1},\cite[VII, \S 3]{lombardiquitte:constructive}.
\begin{enumerate}
\item
$e \in \mathrm{At}\, B$.

\item
$e > 0$ and, for every $a \in B$, either $e \leq a$ or $e \wedge a = 0$.

\item
$e > 0$ and, for every $a \in B$, either $e \leq a$ or $e \leq -a$.
\end{enumerate}

Notice further that if $U \subseteq \mathrm{At}\, B$ is finite,
and $e \in \mathrm{At}\, B$,
then $e \leq \bigvee U$ implies $e \in U$.
Moreover, since every element of $B$ can be written as a finite join of atoms,
if $b \nleq b'$, then there is $e \in \mathrm{At}\, B$ such that
$e \leq b \wedge -b'$.
%See further \cite{lombardiquitte:constructive}.

%%%
%%%
\subsection{Entailments for maps}\label{sequentsformaps}
%%%
%%%

%
%\marginpar{revise the following}
%\marginpar{what kind of finite BA do we need?}
%
%Let $\mathbf{2} = \set{0,1}$ be equipped with the usual Boolean algebra structure.
%It is important to note that the order relation (as well as equality, equivalently) on $\mathbf{2}$ is decidable.
%This generalizes to every finite power $\mathbf{2}^n$,
%the elements of which are represented either as $n$-tuples or
%as functions $\lambda : n \rightarrow 2$,
%with componentwise (or pointwise) meet and join, and order relation.
%Moreover, a negative statement is always witnessed by a corresponding component,
%e.g.~if  $\lambda, \mu \in \mathbf{2}^n$ are such that $\lambda \nleq \mu$,
%then there is $k \leq n$ such that $\lambda(k) \nleq \mu(k)$,
%whence $\mu(k) < \lambda(k)$ in $\mathbf{2}$.
%%
%%
%%For $n \in \mathbb{N}$ let $\mathbf{2}^n$ be the set of functions $\lambda : n \rightarrow \mathbf{2}$.
%%%\[
%%%\lambda : n \rightarrow \mathbf{2}
%%%\]
%%Of course, this $\mathbf{2}^n$ is a Boolean algebra as well.
%Let $\textrm{At}(\mathbf{2}^n)$ denote the set of atoms of $\mathbf{2}^n$,
%which are $\alpha_1, \dots, \alpha_n$, where %$\alpha_i(k) = 1$ if $i = k$, and $\alpha_i(k) = 0$ otherwise.
%\[
%\alpha_i(k) = \begin{cases} 1 \textrm{ if } k = i\\ 0 \textrm{ else.}\end{cases}
%\]

Now we fix a distributive lattice $L$
and a finite discrete Boolean algebra $B$,
according to the conventions in the preceding Section \ref{onlattices}.
Confident that it will not lead to confusion,
the order relations on $L$ and $B$
will both be denoted by $\leq$.
As our domain of discourse we take $L \times B$,
and study an entailment relation $\vdash$ on $L \times B$,
inductively generated by the set of all instances of the following axioms.
\begin{align*}
(x, a), (x, b) &\vdash \tag{s}\\
(x, a), (y, b) &\vdash (x \wedge y, a \wedge b) \tag{$\wedge$}\\
(x, a), (y, b) &\vdash (x \vee y, a \vee b) \tag{$\vee$}\\
&\vdash (0_L, 0_B) \tag{$0$}\\
&\vdash (1_L, 1_B) \tag{$1$}\\
&\vdash \Set{ (x,a) : a \in B} \tag{t}
\end{align*}
where $a \neq b$ in (s).

We follow an idea outlined in \cite{coquand:topandsc},
where suitable axioms of the kind (t) and (s) are taken to present
the space of functions $\mathbb{N} \rightarrow \set{0,1}$
by way of a generated entailment relation.

We read any pair $(x,a)$
as $\phi(x) = a$ for a generic (or yet-to-be-determined)
morphism $\phi: L \rightarrow B$ of lattices.
In this regard, entailment
\[
(x_1, a_1), \dots, (x_k, a_k) \vdash (y_1, b_1), \dots, (y_\ell, b_\ell)
\]
should intuitively be read as
\[
\textit{if } \phi(x_1) = a_1 \dots \textit{and} \dots \phi(x_k) = a_k,
\textit{ then }
\phi(y_1) = b_1 \dots \textit{or} \dots \phi(y_\ell) = b_\ell.
\]
Notice that an ideal element $\alpha \subseteq L \times B$ for $\vdash$ is a relation,
in the first place.
Axiom (t) forces such an $\alpha$ to be total,
while the second axiom (s) ensures single values.
The remaining axioms are to
guarantee that the lattice structure is preserved.
We put on record that this entailment relation really describes what we intend it to:

\begin{lem}
A subset $\alpha \subseteq L \times B$ is an ideal element of $\vdash$
if and only if it is a homomorphism of lattices.
\end{lem}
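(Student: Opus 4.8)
The plan is to prove the two implications separately, at each turn exploiting the intended reading of a pair $(x,a)$ as the assertion $\phi(x) = a$ for a prospective lattice map $\phi : L \to B$.

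First I would show that every lattice homomorphism $\phi : L \to B$, identified with its graph $\alpha = \{(x,\phi(x)) : x \in L\}$, is an ideal element of $\vdash$. The key observation is that the collection of all pairs $(X,Y)$ of finite subsets of $L \times B$ which $\alpha$ \emph{splits}---in the sense that $\alpha \supseteq X$ implies $\alpha \between Y$---is itself an entailment relation on $L \times B$. Reflexivity and monotonicity are immediate; for transitivity one uses that $\alpha \between (Y, z)$ amounts to ``$\alpha \between Y$ or $z \in \alpha$'' and argues by case distinction. Since $\vdash$ is the least entailment relation containing the axioms (s), ($\wedge$), ($\vee$), ($0$), ($1$), (t), it therefore suffices to check that $\alpha$ splits each of these generating instances. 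This is the proof technique for generated entailment relations recalled in Section~\ref{sec:er}. Concretely, (t) is split precisely because $\phi$ is total; (s) is split because the graph of a function cannot contain both $(x,a)$ and $(x,b)$ with $a \neq b$; and ($\wedge$), ($\vee$), ($0$), ($1$) are split exactly because $\phi$ preserves binary meets, binary joins, and the two constants, respectively. Each of these is a one-line verification.

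Conversely, suppose $\alpha \subseteq L \times B$ is an ideal element. Feeding axiom (t) into the splitting rule with empty premise yields, for every $x \in L$, some $a \in B$ with $(x,a) \in \alpha$, so $\alpha$ is a total relation. For single-valuedness, let $(x,a),(x,b) \in \alpha$ and invoke discreteness of $B$ to decide whether $a = b$ or $a \neq b$; in the latter case axiom (s) applies to $\alpha \supseteq \{(x,a),(x,b)\}$ and forces $\alpha \between \emptyset$, which is absurd, so $a = b$. Hence $\alpha$ is the graph of a function $\phi : L \to B$. That $\phi$ preserves $0$ and $1$ now follows by splitting axioms ($0$) and ($1$), and that $\phi$ preserves binary meets and joins follows by splitting ($\wedge$) and ($\vee$) at the pairs $(x,\phi(x)),(y,\phi(y))$; in each case the splitting forces the displayed conclusion pair into $\alpha$, which by single-valuedness is the desired equation.

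I do not anticipate a genuine obstacle: the statement is in essence a bookkeeping result confirming that the axioms (t) and (s) were chosen to pin down totality and single-valuedness while ($\wedge$), ($\vee$), ($0$), ($1$) pin down structure preservation. The only two points that deserve care are that the reduction to generating entailments in the first implication relies on the split pairs forming an entailment relation (so the transitivity check must actually be carried out), and that decidability of equality on $B$ is genuinely used in the single-valuedness step of the converse.
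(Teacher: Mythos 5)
Your proof is correct and is exactly the routine verification the paper has in mind (it states this lemma without proof, relying on the induction-over-generating-axioms technique recalled in Section~\ref{sec:er}, which you carry out carefully, including the check that the pairs split by $\alpha$ form an entailment relation and the use of discreteness of $B$ for single-valuedness). Nothing to add.
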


Entailments are in good accordance with our intuition about lattice maps.
For instance, since every lattice map is order-preserving,
we should expect that an entailment like $X \vdash (x,a)$
sets a certain bound on the set of those abstract statements $(y, b)$,
which still are consistent with $X$
in case $x \leq y$.
%The precise statement is as follows.

\begin{lem}\label{lem:nmonotonic}
For all $x,y \in L$ and $a,b \in B$,
if $x \leq y$ and $a \nleq b$,
then $(x, a), (y,b) \vdash $.
\end{lem}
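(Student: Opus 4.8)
The plan is to derive the empty-conclusion entailment directly from the generating axioms, using the cut rule (T) exactly once. First I would exploit that $x \leq y$ forces $x \wedge y = x$ in the lattice $L$, so that axiom $(\wedge)$ specializes to
\[
(x, a), (y, b) \vdash (x \wedge y, a \wedge b) = (x, a \wedge b).
\]
This will serve as the first of the two premises needed for a cut on the statement $(x, a \wedge b)$.

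Next I would observe that the hypothesis $a \nleq b$ yields $a \wedge b \neq a$ in $B$: indeed, were $a \wedge b = a$, then $a = a \wedge b \leq b$, contradicting $a \nleq b$; this step is constructively unproblematic and needs no decidability assumption. Hence the pair of distinct values $a$ and $a \wedge b$ at the common argument $x$ is an instance of axiom (s), giving $(x, a), (x, a \wedge b) \vdash$, and monotonicity (M) weakens this to $(x, a), (y, b), (x, a \wedge b) \vdash$. This is the second premise for the cut.

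Finally, cutting on $(x, a \wedge b)$ by transitivity (T) combines
\[
(x, a), (y, b) \vdash (x, a \wedge b)
\qquad\text{and}\qquad
(x, a), (y, b), (x, a \wedge b) \vdash
\]
into $(x, a), (y, b) \vdash$, which is the claim. I do not anticipate any genuine obstacle: the only point calling for a moment's care is the passage from $a \nleq b$ to the inequality $a \wedge b \neq a$ of elements of the discrete Boolean algebra $B$, which as noted is immediate. A symmetric route would instead use axiom $(\vee)$ together with $x \vee y = y$ and $a \vee b \neq b$ (again a consequence of $a \nleq b$), cutting on $(y, a \vee b)$; either way the argument is a one-line application of the axioms plus a single cut.
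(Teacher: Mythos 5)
Your proof is correct and coincides with the paper's own argument: both derive $(x,a),(y,b)\vdash(x,a\wedge b)$ from axiom $(\wedge)$ using $x\wedge y=x$, and then cut with the instance $(x,a),(x,a\wedge b)\vdash$ of (s), justified by $a\wedge b\neq a$. Your explicit check that $a\nleq b$ gives $a\wedge b\neq a$ is a welcome (if routine) addition that the paper leaves implicit.
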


\begin{proof}
Notice that we have
\[
(x,a), (y,b) \vdash (x, a \wedge b)
\]
by $(\wedge)$ and since $x \wedge y = x$.
This entailment can be cut with
\[
(x,a), (x, a \wedge b) \vdash
\]
which we have as an instance of (s) because $a \neq a \wedge b$.
\end{proof}

%\begin{lem}\label{lem:nmonotonic}
%For all $x,y \in L$ and $\lambda \in \mathbf{2}^n$,
%if $x \leq y$,
%then $(x, \lambda) \vdash \Set{ (y, \mu) : \lambda \leq \mu }$.
%\end{lem}
%
%\begin{proof}
%Take any $\mu \in \mathbf{2}^n$ such that $\lambda \nleq \mu$,
%and notice that we have
%\[
%(x, \lambda), (y, \mu) \vdash (y, \lambda \vee \mu)
%\]
%by $(\vee)$ and absorption $x \vee y = y$;
%take further into account that $\lambda \leq \lambda \vee \mu$.
%With these entailments
%we cut axiom (t) for $y$,
%which is
%\[
%\vdash \Set{ (y, \mu) : \lambda \leq \mu } \cup \Set{ (y, \mu) : \lambda \nleq \mu },
%\]
%since the order relation on $\mathbf{2}^n$ is decidable.
%Eventually, we will arrive at
%\[
%(x, \lambda) \vdash \Set{ (y, \mu) : \lambda \leq \mu }.\qedhere
%\]
%\end{proof}

It will be useful to have means for moving statements back and forth across the turnstile symbol.
The idea is as follows.
If a set $X$ of statements logically forces an element $x \in L$ to take a certain value $a \in B$
under every given lattice map $L \rightarrow B$,
then adjoining some statement $(x,b)$ to $X$
should turn out inconsistent in case $b \neq a$.
%The generalization is straightforward:

\begin{lem}\label{lem:hinher}
For every finite subset $X \subseteq L \times B$
and elements $x \in L, a \in B$,
the following are equivalent.
\begin{enumerate}

\item
$X \vdash (x, a), Y$

\item
$X, (x, b) \vdash Y$
for every $b \in B$ such that $b \neq a$.

\end{enumerate}
\end{lem}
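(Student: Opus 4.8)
The plan is to establish the two implications separately; in each direction a single structural axiom is combined with one application of cut~(T).

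For $(1) \Rightarrow (2)$, I would fix $b \in B$ with $b \neq a$. Axiom~(s) gives $(x,a),(x,b) \vdash$, whence $X,(x,a),(x,b) \vdash Y$ by monotonicity~(M). Weakening the hypothesis $X \vdash (x,a),Y$ on the left by $(x,b)$ yields $X,(x,b) \vdash (x,a),Y$. Cutting these two sequents on the statement $(x,a)$ produces $X,(x,b) \vdash Y$, as required.

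For $(2) \Rightarrow (1)$, the idea is to put $(x,a)$ on the right by means of the totality axiom~(t) and then to discharge, one at a time, every competing value. Since $B$ is finite and discrete, the subset $\{\, c \in B : c \neq a\,\}$ is detachable, hence finite; fix an enumeration $b_1, \dots, b_n$ of it. As $a \in B$, axiom~(t) together with~(M) gives
\[
X \vdash (x,a), (x,b_1), \dots, (x,b_n), Y .
\]
Then I would remove $(x,b_1), \dots, (x,b_n)$ from the succedent by $n$ successive cuts: at the $i$-th step the left premise is the sequent reached so far, while the right premise
\[
X, (x,b_i) \vdash (x,a), (x,b_{i+1}), \dots, (x,b_n), Y
\]
follows from the hypothesis $X, (x,b_i) \vdash Y$ by monotonicity. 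After the last cut only $X \vdash (x,a), Y$ remains.

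I do not anticipate a genuine obstacle: in both directions it is just a matter of spotting which of~(s) and~(t) to pair with cut. The only point requiring a little care is the constructive bookkeeping in the second implication --- namely that $\{\, c \in B : c \neq a\,\}$ really is a finite set, so that it admits an enumeration and the finitely many cuts can actually be carried out --- and this is exactly where finiteness and discreteness of $B$ enter, a detachable subset of a finite discrete set being finite.
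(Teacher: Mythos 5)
Your proposal is correct and follows essentially the same route as the paper: cut the hypothesis against axiom (s) for one direction, and successively cut the totality axiom (t) against the instances of the hypothesis for the other. The extra remark that $\{\,c\in B : c\neq a\,\}$ is finite (so the cuts terminate) is a detail the paper leaves implicit but is accurate.
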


\begin{proof}
Suppose that $X \vdash (x,a), Y$ is inferrable.
If $b \neq a$,
then $(x, a), (x, b) \vdash$ is an axiom,
whence we get $X, (x, b) \vdash Y$ by cut.
On the other hand, if $X, (x, b) \vdash Y$ whenever $b \neq a$,
then we can successively cut $\vdash \Set{ (x, b) : b \in B}$,
until we arrive at $X \vdash (x, a), Y$.
\end{proof}

\begin{exa}\label{hinher2}
Lemma \ref{lem:hinher} has a particularly simple form for the Boolean algebra $\mathbf{2} = \set{0,1}$,
in which case it amounts to complementation of values:
for every $x \in L$ and $i \in \mathbf{2}$ we have $(x,i) \vdash$ if and only if $\vdash (x,-i)$.
\end{exa}

%%%
%%%
\subsection{A formal Nullstellensatz}\label{sec:nsts}%\marginpar{might be in need of a better title?}
%%%
%%%

%We need to give an explicit description of entailment
%in terms of certain lattice inequalities in $L$,
%conservation will then be immediate.

Following the tradition of dynamical algebra \cite{lombardi:krull,cos:dyn,coq:hidden-krull},
a \emph{formal Nullstellensatz}\footnote{Hilbert's Nullstellensatz as a proof-theoretic result
appears first in \cite{scarp:meta}, see also \cite{lifschitz:sct}.}
for $\vdash$
is a theorem explicitly describing entailment
in terms of certain identities within the algebraic structure at hand.
We will concentrate on a weak form first,
which is to say that we give a direct description of inconsistency.
In view of Lemma \ref{lem:hinher},
this will indeed suffice for a direct, non-inductive description of $\vdash$ overall.

%This is considered to give an \emph{algebraic certificate} for entailment.

%Towards a Nullstellensatz for our entailment relation,
We find it useful to have an additional simple piece of notation.
If $X$ is a finite subset of $L \times B$ and $a \in B$,
then let
% $X_a$ be the preimage of $X$ under the canonical injection $x \mapsto (x,a)$,
%which is to say that
\[
X_a = \Set{ x \in L : (x, a) \in X }.
\]
Note that $X_a$ is finite as well.

The following three lemmas are technical and provide for the proof of Theorem \ref{hilbert} below.
First we give a condition which is sufficient for finite sets of statements to be inconsistent.
%Subsequently, this will lead over to a direct description of $\vdash$.

\begin{lem}\label{contra}
Let $X$ be a finite subset of $L \times B$.
If there is $e \in \mathrm{At}\, B$ such that
\[
\bigwedge_{a \geq e} \bigwedge X_a
\leq
\bigvee_{a \geq e} \bigvee X_{- a},
\]
then $X$ is inconsistent, i.e., $X \vdash$.
\end{lem}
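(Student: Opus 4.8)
The guiding intuition is semantic. An ideal element $\alpha \supseteq X$ is a lattice homomorphism $\phi \colon L \to B$ with $\phi(x) = a$ for every $(x,a) \in X$. Writing $u := \bigwedge_{a \geq e} \bigwedge X_a$ and $v := \bigvee_{a \geq e} \bigvee X_{-a}$, so that the hypothesis reads $u \leq v$, such a $\phi$ would satisfy $\phi(u) \geq e$ --- each generator $x \in X_a$ with $a \geq e$ is sent to $a \geq e$ --- and $\phi(v) \leq -e$ --- each generator $x \in X_{-a}$ with $a \geq e$ is sent to $-a$, and $e \leq a$ forces $-a \leq -e$. Hence $e \leq \phi(u) \leq \phi(v) \leq -e$, so $e = e \wedge -e = 0$, contradicting $e \in \mathrm{At}\,B$. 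Since we want a choice-free, constructive argument, the plan is to carry out precisely this computation \emph{inside} $\vdash$ itself.

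First I would derive $X \vdash (u, c)$ for a suitable $c \in B$ with $c \geq e$. Collecting the pairs $(x,a) \in X$ with $a \geq e$ (a decidable condition, since $\leq$ on the finite algebra $B$ is decidable) and applying axiom~$(\wedge)$ repeatedly --- a straightforward induction on the number of such pairs, using $(\mathrm{M})$ and $(\mathrm{T})$, and falling back on axiom~$(1)$ when there are none, so that $u = 1_L$ --- one obtains $X \vdash (u, c)$, where $c$ is the meet in $B$ of the second coordinates of those pairs (with $c = 1_B$ if there are none); in every case $c \geq e$, since each of these second coordinates is $\geq e$. Dually, collecting the pairs $(x,a) \in X$ with $-a \geq e$ and iterating axiom~$(\vee)$ --- now falling back on axiom~$(0)$ in the empty case, so that $v = 0_L$ --- one obtains $X \vdash (v, d)$ for some $d \in B$ with $d \leq -e$, since each relevant second coordinate lies below $-e$.

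It remains to fuse these. From $c \geq e$, $d \leq -e$ and $e > 0$ we get $c \nleq d$: otherwise $e \leq c \leq d \leq -e$ would force $e = e \wedge -e = 0$. Hence Lemma~\ref{lem:nmonotonic}, applied to $u \leq v$ (the hypothesis) together with $c \nleq d$, yields $(u, c), (v, d) \vdash$. Two cuts now finish the argument: from $X \vdash (v, d)$ and $(u,c), (v,d) \vdash$ one obtains, by $(\mathrm{M})$ and $(\mathrm{T})$ cutting $(v,d)$, that $X, (u, c) \vdash$, and then from $X \vdash (u, c)$ one obtains, cutting $(u,c)$, the desired $X \vdash$. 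I expect the only real work to be organisational --- the bookkept iterated use of $(\wedge)$ and $(\vee)$ and the degenerate cases calling for $(0)$ and $(1)$ --- rather than anything conceptually deep; and it is worth noting that the inequality $c \nleq d$ is exactly the point where non-triviality of $B$, witnessed by the atom $e$, enters.
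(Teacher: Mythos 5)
Your proposal is correct and follows essentially the same route as the paper's proof: derive $X \vdash (u,c)$ and $X \vdash (v,d)$ by iterating axioms $(\wedge)$ and $(\vee)$ over the pairs whose second coordinate lies above $e$ (resp.\ below $-e$), observe $c \nleq d$ from $e \leq c$ and $d \leq -e$, and conclude via Lemma~\ref{lem:nmonotonic} and two cuts. Your explicit treatment of the empty cases via axioms $(0)$ and $(1)$ is a minor point of extra care that the paper leaves implicit.
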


\begin{proof}
We can write $X = \set{ (x_1, a_1), \dots, (x_k, a_k) }$.
Suppose that there is an atom $e$ as indicated.
%By way of axiom $(\wedge)$, we can infer
%\[
%X \vdash \big( \bigwedge_{a \geq e}\bigwedge X_a, b\, \big).
%\]
Let $a \in \set{a_1, \dots, a_k}$.
For every $x \in X_a$
we have $X \vdash ( x, a )$,
and since
\[
\Set{ (x,a) : x \in X_a } \vdash \big( \bigwedge X_a, a \big)
\]
is inferrable, we get
\[
X \vdash \big(\bigwedge X_a, a\big)
\]
by cut.
In particular, we have such an entailment whenever $a \in \set{a_1, \dots, a_k}$ and $a \geq e$.
Thus, writing
\[
x = \bigwedge_{\substack{ a \geq e \\ a \in \set{a_1, \dots, a_k}} } \bigwedge X_a
\]
and
\[
b = \bigwedge \Set{ a \in \set{a_1, \dots, a_k} : a \geq e },
\]
we are able to infer
\[
X \vdash ( x, b ).
\]
In a similar manner, for
\[
y = \bigvee_{\substack{ a \geq e \\ a \in \set{-a_1, \dots, -a_k} }} \bigvee X_{-a}
\]
and
\[
b' = \bigvee \Set{ -a : a \in \set{ -a_1, \dots, -a_k} \text{ and } a \geq e }
\]
we are able to infer
\[
X \vdash (y, b').
\]
It remains to notice that on the one hand we actually have
\[
x
%= \bigwedge_{\substack{a \geq e \\ a \in \set{a_1, \dots, a_k}}} \bigwedge X_a=
%\big(\bigwedge_{\substack{a \geq e \\ a \in \set{a_1, \dots, a_k}}} \bigwedge X_a\big)
%\wedge
%\big(\bigwedge_{\substack{a \geq e \\ a \notin \set{a_1, \dots, a_k}}} \bigwedge X_a\big)
=\bigwedge_{a \geq e} \bigwedge X_a
%\]
%and similarly
%\[
\qquad
\text{and}
\qquad
y = \bigvee_{a \geq e} \bigvee X_{- a}.
\]
Thus $x \leq y$, according to the assumption.
On the other hand,
we have
\[
e \leq b
\qquad
\text{and}
\qquad
b' \leq -e
\]
which implies $b \nleq b'$.
Therefore, by way of Lemma \ref{lem:nmonotonic} we get
\[
(x,b), (y,b') \vdash
\]
and now $X \vdash$ is immediate.
\end{proof}

We are going to make use of the following combinatorial principle.

\begin{lem}\label{lem:combinatorialcut}
Let $A$ be a finite inhabited set and let $L$ and $R$ be subsets of $A$.
If, for every finite subset $U$ of $A$, there is an element $a \in A$ such that
\[
%a \in U \text{ and } a \in L
%\qquad
%\text{or}
%\qquad
%a \notin U \text{ and } a \in R
(\,a \in U\, \land\, a \in L\,)\, \lor\, (\,a \notin U\, \land\, a \in R\,) \tag{$\dagger$}%\tag{$\Phi(U_0,a)$}
\]
then $L$ and $R$ have an element in common, $L \between R$.
\end{lem}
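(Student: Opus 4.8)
The plan is to argue by contradiction—or, since we want to stay constructive, by deriving $L \between R$ directly through a clever choice of test subset $U$. Suppose, working towards the conclusion, that I can pick $U$ in a way that forces the element $a$ supplied by $(\dagger)$ to land in $L \cap R$. The obstacle is that $(\dagger)$ only guarantees \emph{some} $a$ depending on $U$; there is no uniform witness. So the idea is to exploit finiteness of $A$ to run the hypothesis along a descending chain of candidate sets, peeling off one element at a time until the two disjuncts of $(\dagger)$ are forced to agree.

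First I would observe that if we instantiate $(\dagger)$ with $U = A$, we obtain $a_0 \in A$ with $a_0 \in L$ (the second disjunct is vacuous since $a_0 \in A = U$ always holds, but we still only learn $a_0 \in L$, not $a_0 \in R$). Dually, instantiating with $U = \emptyset$ gives some $a_1 \in R$. Neither alone suffices. The key step is to iterate: having found an element that lies in $L$, say via $U = A$ we get $a_0 \in L$; now apply $(\dagger)$ to $U = A \setminus \{a_0\}$. The returned $a$ either satisfies $a \in U \wedge a \in L$—so $a \neq a_0$ and $a \in L$, giving a \emph{new} element of $L$—or satisfies $a \notin U \wedge a \in R$, which forces $a = a_0$ and hence $a_0 \in R$, so $a_0 \in L \cap R$ and we are done. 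Repeating, I build a strictly decreasing sequence of subsets $A \supsetneq A \setminus\{a_0\} \supsetneq A \setminus \{a_0,a_1\} \supsetneq \cdots$, at each stage either extracting a fresh element of $L$ or producing the desired overlap. Since $A$ is finite, this process cannot produce infinitely many distinct elements of $L$: after at most $|A|$ steps the set $U$ would be empty, at which point the disjunct $a \in U$ is false and $(\dagger)$ forces $a \in R$ with $a \notin U$ automatically true—but then $a$ is one of the previously extracted elements $a_i \in L$, whence $a \in L \cap R$.

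To make this precise and constructive I would phrase it as an induction on $|A|$, or equivalently on the size of the complement of the running set $U$. The induction invariant is: after processing we have a detachable subset $U \subseteq A$ together with the knowledge that $A \setminus U \subseteq L$, and either $L \between R$ already holds, or $(\dagger)$ applied to $U$ yields progress (a strictly smaller $U$ still satisfying the invariant). The base case $U = \emptyset$ closes the argument as just sketched, since then the hypothesis hands us $a \in R$ with $a \in A \setminus U = A \subseteq L$. The main subtlety—and the step I expect to require the most care in writing out—is handling the disjunctions in $(\dagger)$ constructively: at each stage we must \emph{decide} which disjunct holds, but that decision is exactly what $(\dagger)$ provides (it is an assumed disjunction, not one we need to prove), and membership in the finite, presumably detachable set $U$ is decidable, so the case split is legitimate. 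Everything else—finiteness of $A$ guaranteeing termination, the bookkeeping of the decreasing chain—is routine.
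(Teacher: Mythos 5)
Your proposal is correct, but it takes a genuinely different route from the paper's proof. The paper argues by structural induction on the size of $A$: writing $A' = A \cup \set{\ast}$ with $\ast \notin A$, it applies the hypothesis $(\dagger)$ twice---once to subsets $U \subseteq A$ and once to subsets of the form $U \cup \set{\ast}$---to obtain ``$(L\cap A) \between (R\cap A)$ or $\ast \in R$'' and ``$(L\cap A) \between (R\cap A)$ or $\ast \in L$'' respectively, and then combines the two disjunctions. You instead run a greedy extraction: starting from $U = A$ you certify one element of $L$ at a time, shrink $U$ by that element, and observe that the second disjunct of $(\dagger)$, whenever it fires, hands you an element of $R$ that was already certified to lie in $L$; since the extracted elements are pairwise distinct members of the finite set $A$, the first disjunct can fire only finitely often, so the second must eventually fire. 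Your invariant $A \setminus U \subseteq L$ and the terminal case $U = \emptyset$ are exactly right, and the case split on the assumed disjunction is constructively unproblematic, as you note. What your version buys is a more algorithmic, single-pass search procedure; what the paper's version buys is that it never needs to track a growing list of distinct witnesses and reduces cleanly to a two-case induction. One point worth making explicit in a write-up: forming $A \setminus \set{a_0,\dots,a_{i-1}}$ as a \emph{finite} subset of $A$, and inferring $a \in \set{a_0,\dots,a_{i-1}}$ from $a \notin U_i$ and $a \in A$, both require equality on $A$ to be decidable. This is not a defect relative to the paper---its own decomposition $A' = A \cup \set{\ast}$ with $\ast$ ``not among the elements of $A$'' leans on the same assumption---and it is harmless in the intended application, where $A = \mathrm{At}\, B$ for a finite \emph{discrete} Boolean algebra $B$; but since the lemma is stated for an arbitrary finite inhabited set, you should flag where discreteness enters.
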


%\begin{lem}
%If $A$ is an inhabited finite set and $L$ and $R$ are bounded formulae,
%then the following principle applies:
%%Then the following principle
%%If, for every finite subset $U_0$ of $A$, there is an element $a \in A$ such that
%\[
%\forall U_0 \subseteq_\omega A\, \exists a \in A\,(\,a \in U_0\, \land\, L(a)\,) \lor (\,a \notin U_0\, \land\, R(a)\,)\,
%\implies\,
%\exists b \in A\, (\,\phi(b)\, \land\, \psi(b)\,).
%%(\,a \in U_0\, \land\, \phi(a)\,) \lor (\,a \notin U_0\, \land\, \psi(a)\,)
%\]
%%then there is $b \in A$ such that
%%\[
%%\phi(a)\, \land\, \psi(a).
%%\]
%\end{lem}

\begin{proof}%\marginpar{does it need improvement?}
We argue by induction on the finite number of elements of $A$.
First we consider a singleton set $A = \set{\ast}$
under assumption of $(\dagger)$.
If we instantiate with $U = \emptyset \subseteq A$,
then this yields $\ast \in R$ immediately.
If instead we instantiate with $U = \set{\ast}$,
then we are led to the left-hand disjunct, thus $\ast \in L$.
%Therefore $a_0 \in L \cap R$.

Next we consider $A' = A \cup \set{\ast}$,
where $\ast$ is an element not among those of $A$.
We suppose that the principle in question is valid for $A$,
and that $(\dagger)$ applies with respect to $A'$.
In particular,
for every finite subset $U$ of $A$,
there is $a \in A'$ such that
\[
(\,a \in U\, \land\, a \in L\,)\, \lor\, (\,a \notin U\, \land\, a \in R\,).
\]
Since either $a \in A$ or $a = \ast$
we get
\[
(\,a \in U\, \land\, a \in L\cap A\,)\, \lor\, (\,a \notin U\, \land\, a \in R\cap A\,)\, \lor\, (\, \ast \in R\,).
\]
Then the inductive hypothesis applies,
whence we infer $(L\cap A) \between (R \cap A)$ or $\ast \in R$.
Similarly, for every finite subset $U$ of $A$ there is $a \in A'$ such that
\[
(\,a \in U \cup \set{\ast}\, \land\, a \in L\,)\, \lor\, (\,a \notin U\cup \set{\ast}\, \land\, a \in R\,).
\]
Again, since either $a \in A$ or $a = \ast$, we get
\[
(\,a \in U\, \land\, a \in L\cap A\,)\, \lor\, (\,a \notin U\, \land\, a \in R\cap A\,)\, \lor\, (\, \ast \in L\,)
\]
which with the inductive hypothesis leads to $(L\cap A) \between (R \cap A)$ or $\ast \in L$.
Taken together, this yields
\[
(L\cap A) \between (R \cap A)\, \lor\, (\,\ast \in L \cap R\,),
\]
whence we have $L \between R$.
%%With regard to the left-hand disjunct,
%%we get
%%\[
%%(\,a \in U_0\, \land\, L(a)\,)\,\lor\, L(a')
%%\]
%%depending on whether $a \in A$ or $a = a'$.
%%Similarly, regarding the right-hand disjunct, we get
%%\[
%%(\,a \notin U_0\, \land\, R(a)\,)\, \lor\, R(a').
%%\]
%%Together
%%\[
%%(\,a \in U_0\, \land\, L(a)\,)\,\lor\, (\,a \notin U_0\, \land\, R(a)\,)\, \lor\, L(a')\, \lor\,R(a').
%%\]
%%
%Now, if $U_0 \subseteq A'$, then either $U_0 \subseteq A$ or $a' \in U_0$.
%Furthermore, according to our assumption,
%for every $U_0 \subseteq A'$ there is an element $a \in A \cup \set{a'}$ such that $\Phi(U_0, a)$ holds.
%Four cases can be distinguished and they are as follows.
%\begin{enumerate}
%\item
%If $U_0 \subseteq A$ and $a \in A$, then $(\,a \in U_0\, \land\, L\big|_A(a)\,)\, \lor\, (\,a \notin U_0\, \land\, R\big|_A(a)\,)$.
%
%\item
%If $U_0 \subseteq A$ and $a = a'$, then $R(a')$.
%
%\item
%If $a' \in U_0$ and $a \in A$, then $(\,a \in U_0\, \land\, L\big|_A(a)\,)\, \lor\, (\,a \notin U_0\, \land\, R\big|_A(a)\,)$.
%
%\item
%If $a' \in U_0$ and $a = a'$, then $L(a')$.
%\end{enumerate}
%Altogether, for every finite subset $U_0$ of $A'$,
%there is $a \in A'$ such that
%\[
%(\,a \in U_0\, \land\, L\big|_A(a)\,)\, \lor\, (\,a \notin U_0\, \land\, R\big|_A(a)\,)\, \lor\, L(a')\, \lor\, R(a')
%\]
%Now we need to go through all finite subsets of $A$.
%
%We may now apply our hypothesis with regard to $A$
%
%There is $a \in A'$ such that
%
%according to which $(\dagger)$ reads
%
%
\end{proof}

\begin{rem}
Classically, we could have given a much shorter proof of Lemma \ref{lem:combinatorialcut}.
In the classical setting, if $A$ is finite, then the subset $R\subseteq A$ has to be finite itself.
We can then instantiate ($\dagger$) by $U=R$, from which the result trivially follows.
\end{rem}

The following may be considered a cut rule for inconsistent sets of statements.
%It is the main Lemma towards an explicit description of $\vdash$.

\begin{lem}\label{atomiccut}
Let $X$ be a finite subset of $L\times B$ and let $x \in L$.
If for every $b \in B$ there is $e \in \mathrm{At}\, B$ such that
\[
\bigwedge_{a \geq e} \bigwedge \big(X,(x, b)\big)_a
\leq
\bigvee_{a \geq e} \bigvee \big(X,(x,b)\big)_{- a}%, \tag{$\dagger$}
\]
then there is $e' \in \mathrm{At}\, B$ such that
\[
\bigwedge_{a \geq e'} \bigwedge X_a
\leq
\bigvee_{a \geq e'} \bigvee X_{- a}%.\tag{$\ddagger$}
\]
\end{lem}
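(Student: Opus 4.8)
The plan is to reduce the statement to the purely combinatorial Lemma~\ref{lem:combinatorialcut}, applied with $A=\mathrm{At}\,B$. Instantiating the hypothesis at $b=0_B$ already exhibits an atom, so $\mathrm{At}\,B$ is inhabited; it is finite by the Structure Theorem. For an atom $e$ I would abbreviate $M(e)=\bigwedge_{a\geq e}\bigwedge X_a$ and $N(e)=\bigvee_{a\geq e}\bigvee X_{-a}$, so that the sought conclusion becomes: there is an atom $e'$ with $M(e')\leq N(e')$.

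The key computation is to unfold the hypothesis inequality for a fixed $b\in B$. Write $e=e(b)$ for the atom it provides. By the atom characterization recalled in Section~\ref{onlattices}, either $e\leq b$ or $e\leq -b$, and not both, since $e>0$ (else $e\leq b\wedge -b=0$). Since $(X,(x,b))_a$ equals $X_a$ for $a\neq b$ and $X_a\cup\set{x}$ for $a=b$, if $e\leq b$ then $b$ lies in the index set $\Set{a : e\leq a}$ while $-b$ does not, and the hypothesis inequality for this $b$ becomes $M(e)\wedge x\leq N(e)$; symmetrically, if $e\leq -b$ then $-b$ lies in the index set but $b$ does not, and it becomes $M(e)\leq N(e)\vee x$. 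Accordingly I would put
\[
\mathcal{L}=\Set{e\in\mathrm{At}\,B : M(e)\wedge x\leq N(e)},\qquad \mathcal{R}=\Set{e\in\mathrm{At}\,B : M(e)\leq N(e)\vee x},
\]
so that $e(b)\in\mathcal{L}$ whenever $e(b)\leq b$ and $e(b)\in\mathcal{R}$ whenever $e(b)\leq -b$.

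Next I would verify the premise $(\dagger)$ of Lemma~\ref{lem:combinatorialcut} for $A=\mathrm{At}\,B$, $L=\mathcal{L}$, $R=\mathcal{R}$. Given a finite $U\subseteq\mathrm{At}\,B$, set $b=\bigvee U$ and $e'=e(b)$. If $e'\leq b$, then $e'\in U$ (an atom below a finite join of atoms is one of them) and $e'\in\mathcal{L}$; if instead $e'\leq -b$, then $e'\notin U$ (else $e'\leq b\wedge -b=0$) and $e'\in\mathcal{R}$. Either way $(\dagger)$ holds, so Lemma~\ref{lem:combinatorialcut} yields $\mathcal{L}\between\mathcal{R}$, i.e. an atom $e'\in\mathcal{L}\cap\mathcal{R}$. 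For this $e'$ we have $M(e')\leq N(e')\vee x$; meeting with $M(e')$ and distributing gives $M(e')=(M(e')\wedge N(e'))\vee(M(e')\wedge x)\leq N(e')$, the first disjunct being trivially below $N(e')$ and the second by $e'\in\mathcal{L}$. This is the required conclusion.

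The main obstacle is the middle step: one must carefully unfold $\bigwedge_{a\geq e}\bigwedge(X,(x,b))_a$ and $\bigvee_{a\geq e}\bigvee(X,(x,b))_{-a}$, and use the atom properties — in particular that exactly one of $e\leq b$, $e\leq -b$ holds — to see that the single displayed hypothesis inequality splits into exactly the two one-sided inequalities $M(e)\wedge x\leq N(e)$ and $M(e)\leq N(e)\vee x$ that the combinatorial cut is designed to merge. Note that the shorter route via Lemma~\ref{contra} and Lemma~\ref{lem:hinher} would only deliver $X\vdash$, and recovering an atom witnessing this is precisely the Formal Nullstellensatz that the present lemma is meant to help establish, so that route is circular.
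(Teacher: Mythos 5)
Your proposal is correct and follows essentially the same route as the paper's proof: unfold the hypothesis at a fixed $b$ into the two one-sided inequalities $M(e)\wedge x\leq N(e)$ (when $e\leq b$) and $M(e)\leq N(e)\vee x$ (when $e\leq -b$), instantiate $b=\bigvee U$ to verify the premise of Lemma~\ref{lem:combinatorialcut} for $A=\mathrm{At}\,B$, and finish by cut (distributivity) in $L$. Your extra remarks---that the hypothesis at $b=0_B$ guarantees $\mathrm{At}\,B$ is inhabited, and that the detour through Lemma~\ref{contra} would be circular---are sound but not points the paper makes explicit.
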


\begin{proof}
Notice first that if $e$ is an atom of $B$, then the inequality
\[
\bigwedge_{a \geq e} \bigwedge \big(X,(x, b)\big)_a
\leq
\bigvee_{a \geq e} \bigvee \big(X,(x,b)\big)_{- a} \tag{$\#$}
\]
amounts to
\[
\bigwedge_{a \geq e} \bigwedge X_a \wedge x
\leq
\bigvee_{a \geq e} \bigvee X_{- a} \tag{$L_e$}
\]
in case $b \geq e$,
and to
\[
\bigwedge_{a \geq e} \bigwedge X_a
\leq
\bigvee_{a \geq e} \bigvee X_{- a} \vee x \tag{$R_e$}
\]
otherwise, i.e., in case of $b \ngeq e$.
We need to find an atom $e$ for which both $L_e$ and $R_e$ hold---cut
in the lattice $L$ then allows to draw the desired conclusion.
To this end, set
\[
L = \Set{ e \in \mathrm{At}\,B : L_e }
\qquad
\text{and}
\qquad
R = \Set{ e \in \mathrm{At}\,B : R_e }.
\]
In particular, for every finite subset $U$ of $\mathrm{At}\, B$
our assumption applies to the finite join $\bigvee U$
for which there is $e \in \mathrm{At}\,B$ such that either
$e \leq \bigvee U$ and $L_e$,
or else $e \nleq \bigvee U$ and $R_e$.
Taking into account that $e \leq \bigvee U$ if and only if $e \in U$,
we see that the combinatorial Lemma \ref{lem:combinatorialcut} applies,
whence we get $L \between R$.
\end{proof}

%%%KEEP THIS FOR LATER USE

%\begin{rem}
%Notice that if in the setting of Lemma \ref{lem:combinatorialcut} the subset $R$ of $A$ happened to be finite,
%then the corresponding conclusion of Lemma \ref{lem:combinatorialcut} would be immediate
%by instantiating $U = R$ itself.
%Thus, Lemma \ref{lem:combinatorialcut} is classically trivial.
%With regard to its application in the proof of Lemma \ref{atomiccut},
%notice that if $R_e$ would be decidable for every atom $e$ of $B$,
%for instance if the lattice $L$ was assumed to be discrete,
%then $R$ would be a finite subset of $B$ and we could consider the join $\bigvee R$ in $B$.
%According to the assumption of Lemma \ref{atomiccut},
%there would then be a corresponding atom $e$ such that inequality $\#$
%would hold with respect to $\bigvee R$.
%As could be seen by a simple argument, this would entail both $L_e$ and $R_e$.
%%An application of Lemma \ref{lem:combinatorialcut} would then be redundant.
%\end{rem}
%

Finally, here is how to describe the finite inconsistent subsets for $\vdash$ explicitly.
As it turns out, entailment $X \vdash $ amounts to a certain inequality in the lattice $L$.
This (weak) formal Nullstellensatz lies at the heart of conservation.
\begin{thm}[Weak Nullstellensatz]\label{hilbert:incons}
For every finite subset $X$ of $L \times B$,
the following are equivalent.
\begin{enumerate}
\item
$X \vdash$
\item
There is an atom $e \in \mathrm{At}\, B$ such that
\[
\bigwedge_{a \geq e} \bigwedge X_a \leq \bigvee_{a \geq e} \bigvee X_{- a}
\]
\end{enumerate}
\end{thm}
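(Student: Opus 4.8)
The plan is to prove the two implications separately, relying on the three technical lemmas (\ref{contra}, \ref{lem:combinatorialcut}, \ref{atomiccut}) that precede the statement.

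The direction (2)$\Rightarrow$(1) is immediate: it is exactly the content of Lemma~\ref{contra}, which says that the displayed inequality for some atom $e$ forces $X\vdash$. So no work is needed there beyond citing it.

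The substance is (1)$\Rightarrow$(2). Here I would apply Lemma~\ref{lem:incpred} with $\Phi$ taken to be the predicate
\[
\Phi(X)\;:\equiv\;\exists e\in\mathrm{At}\,B\ \Big(\bigwedge_{a\ge e}\bigwedge X_a\le\bigvee_{a\ge e}\bigvee X_{-a}\Big).
\]
First I would check the hypotheses of Lemma~\ref{lem:incpred}: each $X$ with $\Phi(X)$ is inconsistent (that is Lemma~\ref{contra}), and $\Phi$ is upward monotone (adding statements to $X$ only enlarges the meets on the left and the joins on the right, for any fixed $e$; this is a short computation with $X_a$ and $X_{-a}$). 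Since $\Phi$ is monotone and $\vdash$ is inductively generated, the add-on to Lemma~\ref{lem:incpred} tells us it suffices to verify the simplified rule
\[
\infer{\Phi(X,Z)}{(X,Y)\in\mathcal{E}\quad\forall y\in Y\ \Phi(y,Z)}
\]
for the six generating axiom schemes (s), ($\wedge$), ($\vee$), ($0$), ($1$), (t). For the axioms with empty conclusion — (s), where $(x,a),(x,b)\vdash$ with $a\neq b$ — we must exhibit an atom witnessing $\Phi(\{(x,a),(x,b)\})$ directly; since $a\neq b$ in a discrete Boolean algebra, either $a\nleq b$ or $b\nleq a$, and in the first case there is an atom $e\le a\wedge -b$, so $e\le a$ and $e\le -b$, giving $x$ on the left and $x$ among the $X_{-a}$-joins on the right, hence the inequality. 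For ($0$) and ($1$), whose conclusion is a singleton $(0_L,0_B)$ resp.\ $(1_L,1_B)$, I would check that $\Phi(\{(0_L,0_B)\},Z)$ and $\Phi(\{(1_L,1_B)\},Z)$ follow from $\Phi(Z)$ (or hold outright): adjoining $(0_L,0_B)$ puts $0_L$ into the left meet, and adjoining $(1_L,1_B)$ puts $1_L$ into the $X_{-1_B}=X_{0_B}$ join — wait, one must track carefully whether $e\le 1_B$ always and $e\le 0_B$ never, which pins down which side these land on; this bookkeeping is routine but must be done. The cases ($\wedge$) and ($\vee$) are where Lemma~\ref{contra}'s computation is reused: from $\Phi(x\wedge y,a\wedge b,Z)$ one must produce $\Phi(x,a),(y,b),Z)$, using that $c\ge e$ with $c=a\wedge b$ iff $a\ge e$ and $b\ge e$ (atomicity), and similarly for joins and complements.

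The genuinely non-routine case is (t): the axiom $\vdash\{(x,a):a\in B\}$ has a large conclusion, so the simplified rule requires: if $\Phi\big((x,b),Z\big)$ holds for \emph{every} $b\in B$, then $\Phi(Z)$ holds. This is precisely the statement of Lemma~\ref{atomiccut} (after rewriting $\big(Z,(x,b)\big)_a$ in the notation there). So I would invoke Lemma~\ref{atomiccut} here. I expect \emph{this} — recognizing that the (t)-case is exactly the atomic cut lemma, and that the atomic cut lemma in turn rests on the constructive combinatorial Lemma~\ref{lem:combinatorialcut} — to be the conceptual heart of the argument; everything else is lattice-arithmetic bookkeeping with the operators $X_a$, $X_{-a}$, and the fact that in a discrete Boolean algebra an atom $e$ satisfies $e\le c$ or $e\le -c$ for every $c$, and $c=\bigvee(\text{atoms below }c)$. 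Once all six axiom cases are dispatched, Lemma~\ref{lem:incpred} yields $\{X:X\vdash\}\subseteq\Phi$, which is exactly (1)$\Rightarrow$(2), completing the equivalence.
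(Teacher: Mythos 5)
Your proposal is correct and follows the paper's own proof essentially verbatim: both apply Lemma~\ref{lem:incpred} to the monotone family $\Phi$ of sets satisfying item (2), use Lemma~\ref{contra} to see that every member of $\Phi$ is inconsistent (which also gives (2)$\Rightarrow$(1)), verify the simplified rule axiom by axiom, and recognize that the only non-routine case is (t), which is exactly Lemma~\ref{atomiccut}. The extra bookkeeping you spell out for (s), ($0$), ($1$), ($\wedge$), ($\vee$) is what the paper dismisses as ``straightforward,'' and it checks out.
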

\begin{proof}
We apply Lemma \ref{lem:incpred} to which end we consider the family $\Phi$ of all finite subsets of $L \times B$
satisfying the second item of Theorem \ref{hilbert:incons}.
Apparently, this $\Phi$ is monotone.
Lemma \ref{contra} asserts that every member of $\Phi$ is inconsistent.
In order to show $\Phi = \Set{ X \in \mathcal{P}_\mathrm{fin}(S) : X \vdash }$,
% \[
% \Phi = \Set{ X \in \mathcal{P}_\mathrm{fin}(S) : X \vdash },
% \]
it thus suffices to check item (\ref{incpred:resolve}) of Lemma \ref{lem:incpred}
with respect to every initial entailment.
But this is straightforward except for the axiom of totality (t),
which is taken care of by Lemma~\ref{atomiccut}.
\end{proof}
\begin{cor}[Formal Nullstellensatz]\label{hilbert}
For every finite subset $X$ of $L \times B$,
and every finite set of pairs $(y_1, b_1), \dots, (y_k, b_k) \in L \times B$,
the following are equivalent.
\begin{enumerate}
\item
$X \vdash (y_1, b_1), \dots, (y_k, b_k)$
\item
For all $b_1' \neq b_1, \dots, b_k' \neq b_k$
there is $e \in \mathrm{At}\, B$ such that
\[
\bigwedge_{a \geq e} \bigwedge \big( X, \set{(y_i, b_i')}_{1\leq i\leq k} \big)_a
\leq
\bigvee_{a \geq e} \bigvee \big( X, \set{(y_i, b_i')}_{1 \leq i \leq k} \big)_{- a}
\]
\end{enumerate}
\end{cor}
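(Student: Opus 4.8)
The plan is to reduce the statement to the empty-conclusion case already settled in Theorem~\ref{hilbert:incons}, by repeatedly invoking Lemma~\ref{lem:hinher} to transport the succedent statements $(y_1, b_1), \dots, (y_k, b_k)$ across the turnstile, one at a time.

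Concretely, I would first prove by induction on $k$ that
\[
X \vdash (y_1, b_1), \dots, (y_k, b_k)
\quad\Longleftrightarrow\quad
X, (y_1, b_1'), \dots, (y_k, b_k') \vdash\ \text{ for all } b_1' \neq b_1, \dots, b_k' \neq b_k.
\]
For $k = 0$ this is trivial. For the inductive step one applies Lemma~\ref{lem:hinher} with $x = y_{k+1}$ and $a = b_{k+1}$: it turns $X \vdash (y_1,b_1),\dots,(y_{k+1},b_{k+1})$ into the assertion that $X,(y_{k+1},b_{k+1}') \vdash (y_1,b_1),\dots,(y_k,b_k)$ for every $b_{k+1}' \neq b_{k+1}$, and to each of these one applies the induction hypothesis with the finite antecedent $X \cup \set{(y_{k+1},b_{k+1}')}$ in place of $X$. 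Nesting the two universal quantifiers yields the displayed equivalence for $k+1$.

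It then remains to feed each empty-conclusion entailment into Theorem~\ref{hilbert:incons}. For a fixed choice of $b_1' \neq b_1, \dots, b_k' \neq b_k$, set $X' := X \cup \set{(y_i, b_i')}_{1\leq i\leq k}$; by the Weak Nullstellensatz, $X' \vdash$ holds if and only if there is an atom $e \in \mathrm{At}\, B$ with $\bigwedge_{a \geq e} \bigwedge X'_a \leq \bigvee_{a \geq e} \bigvee X'_{-a}$, and since $X' = \big(X, \set{(y_i,b_i')}_{1\leq i\leq k}\big)$ this is precisely item (2) of the corollary. I do not anticipate a genuine obstacle here; the only point demanding care is the bookkeeping in the iterated use of Lemma~\ref{lem:hinher}, which is harmless because that lemma permits an arbitrary finite antecedent and succedent, and because the resulting condition is manifestly symmetric in the indices $1, \dots, k$ (so the order in which the $(y_i,b_i)$ are moved is immaterial).
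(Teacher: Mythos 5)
Your proposal is correct and follows the paper's own proof exactly: the paper likewise reduces to empty-conclusion entailments by repeated application of Lemma~\ref{lem:hinher} and then invokes Theorem~\ref{hilbert:incons}. Your explicit induction on $k$ merely spells out the bookkeeping that the paper leaves implicit.
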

\begin{proof}
Apply Lemma \ref{lem:hinher} repeatedly in order to reduce to empty conclusion entailments.
Theorem \ref{hilbert:incons} then yields the claim.
\end{proof}

It is interesting to note that
non-triviality of $\vdash$ is for free,
given that $L$ is non-trivial:
\begin{cor}
The following are equivalent.
\begin{enumerate}
\item
$\emptyset \vdash \emptyset$
\item
$1 = 0$ in $L$.
\end{enumerate}
%
%If $L$ is non-trivial, then so is the entailment relation $\vdash$.
%%In other words, $\emptyset \vdash \emptyset$ if and only if $0 = 1$ in $L$.
\end{cor}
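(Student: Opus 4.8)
The plan is to apply the Weak Nullstellensatz (Theorem~\ref{hilbert:incons}) in the special case $X = \emptyset$, and to read off what the stated inequality becomes when there are no abstract statements around. First I would note that $\emptyset \vdash \emptyset$ is literally the assertion that the empty set of statements is inconsistent, so by Theorem~\ref{hilbert:incons} it is equivalent to the existence of an atom $e \in \mathrm{At}\, B$ with
\[
\bigwedge_{a \geq e} \bigwedge \emptyset_a
\leq
\bigvee_{a \geq e} \bigvee \emptyset_{-a}.
\]
Since $\emptyset_a = \emptyset$ for every $a \in B$, the left-hand side is an empty meet, hence $1_L$, and the right-hand side is an empty join, hence $0_L$; so the displayed inequality collapses to $1_L \leq 0_L$, i.e.\ $1 = 0$ in $L$. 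The only subtlety is the existential quantifier over atoms: $\emptyset \vdash \emptyset$ asserts that \emph{there is} such an atom $e$, which a priori presupposes $\mathrm{At}\, B$ is inhabited. But the inequality $1 \leq 0$ does not depend on $e$ at all, so for the direction (1)$\Rightarrow$(2) we simply use that \emph{some} $e$ gives us $1 \leq 0$.

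For the converse, (2)$\Rightarrow$(1), I would use that $1_B = \bigvee \mathrm{At}\, B$ (recorded in Section~\ref{onlattices}), which in particular forces $\mathrm{At}\, B$ to be inhabited whenever $B$ is a genuine Boolean algebra, so we may pick some atom $e$; then the degenerate inequality $1_L \leq 0_L$ holds by hypothesis, and Theorem~\ref{hilbert:incons} delivers $\emptyset \vdash$. One might worry about the edge case $B$ trivial, but a trivial $B$ would already make every $X$ inconsistent (indeed $(x,0_B),(x,0_B) \vdash$ would collapse under $0_B = 1_B$ via axiom (s) only if $0_B \neq 1_B$, so one should check this carefully); in any case, for the intended reading $B$ is a finite discrete Boolean algebra with a nonempty atom set, and I expect the cleanest write-up to simply invoke $1_B = \bigvee \mathrm{At}\, B$ to justify picking an atom.

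The main obstacle, such as it is, is purely bookkeeping: making sure the empty meet and empty join conventions are applied correctly so that the inequality in Theorem~\ref{hilbert:incons} really degenerates to $1 \leq 0$, and handling the quantifier over $\mathrm{At}\, B$ so that the argument is valid constructively (the existential in the Nullstellensatz is constructively meaningful precisely because $\mathrm{At}\, B$ is a finite inhabited set). There is no real content beyond unwinding definitions; the corollary is a sanity check confirming that the generated entailment relation is non-trivial exactly when $L$ is. I would therefore keep the proof to two or three sentences: invoke Theorem~\ref{hilbert:incons} with $X = \emptyset$, observe that $\emptyset_a = \emptyset$ for all $a$, and conclude that the inequality reads $1_L \leq 0_L$, using $1_B = \bigvee \mathrm{At}\, B$ for the backward direction to supply an atom.
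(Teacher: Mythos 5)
Your proof is correct and takes essentially the same route as the paper: both reduce the claim to the Weak Nullstellensatz (Theorem~\ref{hilbert:incons}) and observe that the relevant inequality degenerates to $1_L \leq 0_L$. The only cosmetic difference is that the paper applies the Nullstellensatz to $\set{(0_L,0_B),(1_L,1_B)}$, linking this to $\emptyset \vdash \emptyset$ by cutting with the axioms $(0)$ and $(1)$, whereas you apply it directly to $X = \emptyset$; in either case an atom of $B$ must be available for the backward direction, a point you rightly flag (and which the paper leaves implicit).
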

\begin{proof}
Since we have $\vdash (0_L, 0_B)$ and $\vdash (1_L, 1_B)$ as axioms,
the entailment relation $\vdash$ is trivial if and only if $(0_L, 0_B), (1_L, 1_B) \vdash$ can be inferred.
But if $e$ is an arbitrary atom of $B$, then
\[
1_L = \bigwedge_{a \geq e} \bigwedge \set{ (0_L, 0_B), (1_L, 1_B) }_a
\leq
\bigvee_{a \geq e} \bigvee \set{ (0_L, 0_B), (1_L, 1_B) }_{-a} = 0_L.
\]
On the other hand,
if indeed $1 = 0$ in $L$,
then $\emptyset \vdash \emptyset$ can be inferred accordingly.
\end{proof}

How to formally translate Theorem \ref{hilbert:incons} into
the classical extension theorem will be explained in the next section.
First let us see a couple of interesting consequences
that can be derived from the formal Nullstellensatz.
% For instance,
% to be in line with basic properties of lattice maps,
% we do have some expectations on how $\vdash$ handles complements:
%and how the two constants $0 \in L$ and $1 \in L$ get involved.

%%% KEEP THIS FOR LATER USE

%\begin{exa}\label{homtocomplement}
%Suppose that $x \in L$ is complemented,
%i.e.~there is an element $-x \in L$ such that $x \wedge -x = 0$
%and $x \vee -x = 1$.
%For every $b \in B$
%we should be able to infer
%\[
%(x,b) \vdash (-x, -b).
%\]
%In fact,
%if $b' \neq -b$,
%then there is an atom $e$ such that either
%\[
%e \leq b \text{ and } e \leq b'
%\qquad
%\text{or}
%\qquad
%e \leq -b \text{ and } e \leq -b',
%\]
%depending on which
%\[
%\bigwedge_{a \geq e} \bigwedge \set{(x,b),(-x,b')}_a
%\leq
%\bigvee_{a \geq e} \bigvee \set{(x,b),(-x,b')}_{-a}
%\]
%amounts to either $x \wedge -x \leq 0$,
%or else $1 \leq x \vee -x$.
%Anyway, Theorem \ref{hilbert} applies.
%\end{exa}

\begin{exa}\label{exa:orderreflection}
The order relation on $L$ can be expressed in terms of entailment.
With the formal Nullstellensatz Corollary \ref{hilbert} it can be shown that for all $x,y \in L$ the following are equivalent:
\begin{enumerate}
\item
$x \leq y$
\item
$(x, 1_B) \vdash (y, 1_B)$
\item
$(y, 0_B) \vdash (x, 0_B)$
\end{enumerate}

\end{exa}

\begin{exa}\label{nullenundeinsen}
For every $x \in L$ the following are equivalent:
\begin{enumerate}
\item
$x = 0_L$
\item
$\vdash (x,0_B)$
\item
$(x,1_B) \vdash$
\end{enumerate}
With completeness,
this leads over to the assertion that
$0 \in L$ is the only element which maps to $0 \in B$
under \emph{every} lattice map $\phi : L \rightarrow B$.
% see also Section \ref{sec:furtherconsofcomp}.
%Similar results
%
%For every $x \in L$ we have
%$\vdash (x,0)$ if and only if $x = 0$,
%as well as $\vdash (x,1)$ if and only if $x = 1$.
%With completeness,
%this leads over to the assertion that
%$0 \in L$ is the only element which maps to $0 \in B$
%under \emph{every} lattice map $\phi : L \rightarrow B$,
%and analogously for $1 \in L$;
%see also Section \ref{sec:furtherconsofcomp}.

Dually, for every $x \in L$ the following are equivalent:
\begin{enumerate}
\item
$x = 1_L$
\item
$\vdash (x,1_B)$
\item
$(x,0_B) \vdash$
\end{enumerate}
\end{exa}

% \begin{rem}
%   As rightly observed by the anonymous referees,
%   since every finite discrete Boolean algebra $B$ is isomorphic to a product $\mathbf{2}^n$,
%   the description of the entailment relation for lattice maps $L \rightarrow B$
%   should be reduced to the one for lattice maps $L \rightarrow \mathbf{2}$,
%   the latter of which in fact corresponds to the free Boolean algebra generated by $L$ \cite[Section 4.3]{coquand:erdl}.
%   We took this approach in a follow-up \cite{rw:some:pre} to the present paper.
%   There we show that with a suitably generated entailment relation it is possible to go beyond
%   the finite case, and address Sikorski's theorem for the class of complete atomic Boolean algebras.
% \end{rem}
\begin{rem}
Results about entailment relations
can sometimes be used to facilitate
proofs of classical theorems.
This advantage has also been pointed out in \cite{coq:hidden-krull},
and includes the version of Sikorski's theorem considered in this paper.
Of course, completeness (CT) needs to be invoked to this end.
We briefly mention another example,
which leads to the \emph{Representation Theorem} for distributive lattices \cite[Prop.~I.2.5]{johnstone:stsp},
a classical equivalent of CT:
\begin{quote}\emph{
If $L$ is a distributive lattice and $x,y \in L$ are such that $x \nleq y$,
then there exists a homomorphism of lattices $\phi : L \rightarrow \mathbf{2}$
such that $\phi(x) = 1$ and $\phi(y) = 0$.}
\end{quote}
Indeed, by way of Theorem \ref{hilbert} for the Boolean algebra $\mathbf{2}$,
we have $x \leq y$ if and only if $(x,1), (y,0) \vdash$.
Therefore, if $x \nleq y$,
then, by CT and classical logic,
there is an ideal element
witnessing $(x,1), (y,0) \nvdash$.
This is a homomorphism $\phi : L \rightarrow \mathbf{2}$ of lattices
such that $\phi(x) = 1$ and $\phi(y) = 0$.
\end{rem}

%%%
%%%
\subsection{Extension by conservation}\label{section:extbycon}
%%%
%%%

Now let us see how Proposition \ref{hilbert} relates to the classical extension theorem.
Suppose that $L$ and $L'$ are distributive lattices.
Given a finite discrete Boolean algebra $B$,
we have two entailment relations as above,
which we denote by $\vdash$ and $\vdash'$, respectively,
each of which describes lattice maps $L \rightarrow B$.
Notice that every lattice map $\phi : L \rightarrow L'$ gives way to an \emph{interpretation} (cf. Section \ref{sec:er})
\[
f_\phi : (L \times B, \vdash) \rightarrow (L' \times B, \vdash'),
\quad
(x, a) \mapsto (\phi(x), a).
\]
Indeed, it suffices to show that $f_\phi$ maps generating axioms for $\vdash$ to those of $\vdash'$,
which is clear since $\phi$ preserves the lattice structure.

\begin{prop}
If $\phi : L \rightarrow L'$ is an injective map of lattices and $B$ a finite discrete Boolean algebra,
then the induced interpretation
\[
f_\phi : (L \times B, \vdash) \rightarrow (L' \times B, \vdash'),
\quad
(x, a) \mapsto (\phi(x), a)
\]
is conservative, i.e., $f_\phi(X) \vdash' f_\phi(Y)$ implies $X \vdash Y$.
\end{prop}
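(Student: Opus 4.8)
The plan is to reduce the statement to the Formal Nullstellensatz (Corollary~\ref{hilbert}), using Lemma~\ref{lem:hinher} to strip conclusions on both sides so that only empty-conclusion entailments need to be compared. First I would observe that by Lemma~\ref{lem:hinher}, applied repeatedly on both $\vdash$ and $\vdash'$, the entailment $f_\phi(X) \vdash' f_\phi(Y)$ with $Y = \{(y_1,b_1),\dots,(y_k,b_k)\}$ is equivalent to the family of empty-conclusion entailments $f_\phi(X), \{(\phi(y_i),b_i')\}_i \vdash'$ ranging over all choices $b_i' \neq b_i$; and similarly $X \vdash Y$ is equivalent to $X, \{(y_i,b_i')\}_i \vdash$ over the same choices. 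Since $f_\phi(X), \{(\phi(y_i),b_i')\}_i = f_\phi\big(X, \{(y_i,b_i')\}_i\big)$, it therefore suffices to prove the claim for $Y = \emptyset$, i.e. that $f_\phi(X) \vdash'$ implies $X \vdash$.

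Next I would unfold both sides via the Weak Nullstellensatz (Theorem~\ref{hilbert:incons}). On the $L'$ side, $f_\phi(X) \vdash'$ means there is an atom $e \in \mathrm{At}\,B$ with
\[
\bigwedge_{a \geq e} \bigwedge f_\phi(X)_a \leq \bigvee_{a \geq e} \bigvee f_\phi(X)_{-a}
\]
in $L'$. The key bookkeeping point is that $f_\phi(X)_a = \phi(X_a)$ for every $a \in B$, because $f_\phi$ only relabels the first coordinate by $\phi$ and leaves the $B$-component untouched; hence the left-hand side is $\phi\big(\bigwedge_{a\geq e}\bigwedge X_a\big)$ and the right-hand side is $\phi\big(\bigvee_{a\geq e}\bigvee X_{-a}\big)$, using that $\phi$ is a lattice homomorphism and so commutes with the finite meets and joins involved. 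So the inequality in $L'$ reads $\phi(p) \leq \phi(q)$ where $p = \bigwedge_{a\geq e}\bigwedge X_a$ and $q = \bigvee_{a\geq e}\bigvee X_{-a}$ lie in $L$.

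Now I would invoke injectivity of $\phi$: a lattice homomorphism that is injective reflects the order, since $\phi(p) \leq \phi(q)$ gives $\phi(p) = \phi(p \wedge q) = \phi(p) \wedge \phi(q)$, whence $p = p \wedge q$ by injectivity, i.e. $p \leq q$ in $L$. This yields an atom $e$ witnessing the second item of Theorem~\ref{hilbert:incons} for $X$ in $L$, so $X \vdash$. Combining with the first reduction, $f_\phi(X) \vdash' f_\phi(Y)$ implies $X \vdash Y$, as required. The only genuinely delicate step is the first one — making sure the repeated application of Lemma~\ref{lem:hinher} is legitimate on the primed side and that the resulting set $f_\phi(X),\{(\phi(y_i),b_i')\}_i$ is literally the $f_\phi$-image of $X,\{(y_i,b_i')\}_i$, so that the empty-conclusion case applies verbatim; everything after that is the routine identity $f_\phi(X)_a = \phi(X_a)$ together with homomorphism and injectivity of $\phi$.
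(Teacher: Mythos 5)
Your proposal is correct and follows essentially the same route as the paper: reduce to empty-conclusion entailments (the paper states this as ``it suffices to show conservation of inconsistent sets,'' justified exactly as you do via Lemma~\ref{lem:hinher}), then apply the weak Nullstellensatz on both sides, using $f_\phi(X)_a = \phi(X_a)$ and the fact that an injective lattice homomorphism reflects the order. You merely spell out in more detail the two steps the paper leaves implicit.
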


%\begin{prop}
%For every lattice map $\phi : L \rightarrow L'$,
%the following are equivalent.
%\begin{enumerate}
%
%\item
%$\phi$ is injective.
%
%\item
%For every finite Boolean algebra $B$, the induced interpretation
%\[
%f_\phi : (L \times B, \vdash) \rightarrow (L' \times B, \vdash'),
%\quad
%(x, a) \mapsto (\phi(x), a)
%\]
%is conservative, i.e.~$f_\phi(X) \vdash' f_\phi(Y)$ implies $X \vdash Y$.
%
%\end{enumerate}
%\end{prop}

\begin{proof}
%Suppose that $\phi : L \rightarrow L'$ is injective,
%and let $B$ be a finite Boolean algebra,
%giving rise to entailment relations $\vdash$ and $\vdash'$ as above.
It suffices to show conservation of inconsistent sets.
Hence, let $X \subseteq L \times B$ and suppose that $f_\phi(X) \vdash'$.
According to Theorem \ref{hilbert},
there is $e \in \mathrm{At}\, B$ such that
\[
\bigwedge_{a \geq e} \bigwedge {f_\phi(X)}_a
\leq
\bigvee_{a \geq e} \bigvee {f_\phi(X)}_{- a}
\]
in $L'$.
This means
\[
\phi \Big( \bigwedge_{a \geq e} \bigwedge X_a \Big)
\leq
\phi \Big( \bigvee_{a \geq e} \bigvee X_{- a} \Big)
\]
and implies $X \vdash$ by injectivity and Theorem \ref{hilbert}, once more.
\end{proof}

\begin{rem}
If $\phi : L \rightarrow L'$ is a lattice map for which the induced
interpretation $f_\phi$ is conservative with regard to a finite discrete Boolean algebra $B$,
then $\phi$ is injective.
%
%The only lattice maps $\phi : L \rightarrow L'$ that induce conservative interpretations $f_\phi$ as above are the injective ones.
%Indeed, suppose that $f_\phi$ is conservative with regard to a finite Boolean algebra $B$.
In fact, recall from Example \ref{exa:orderreflection} that we have
\[
\phi(x) \leq \phi(y)
\qquad
\text{if and only if}
\qquad
(\phi(x),1) \vdash' (\phi(y),1),
%f_\phi(x,1) \vdash' f_\phi(y,1)
\]
which is to say that $\phi(x) \leq \phi(y)$ if and only if $f_\phi(x,1) \vdash' f_\phi(y,1)$.
Likewise, $x \leq y$ is equivalent to having the entailment $(x,1) \vdash (y,1)$.
%Hence, if $f_\phi$ is conservative, then,
%for all $x,y \in L$, we have $\phi(x) \leq \phi(y)$ if and only if $x \leq y$.
%In other words, $\phi$ is injective.
Therefore, if $f_\phi$ is conservative,
then $\phi$ is injective.
\end{rem}

Since $f_\phi$ is an interpretation,
the inverse image mapping of $f_\phi$
restricts on ideal elements
\[
f_\phi^{-1} : \mathfrak{Spec}(\vdash') \rightarrow \mathfrak{Spec}(\vdash)
\]
and it is easy to see that $f_\phi^{-1}(\alpha) = \alpha \circ \phi$.
Recall from Section \ref{sec:er},
that---with completeness at hand---conservation amounts to $f_\phi^{-1}$ being surjective.
Thus, if $\phi : L \rightarrow L'$ is a monomorphism of lattices,
then for every $\alpha: L \rightarrow B$
there is $\beta : L' \rightarrow B$ with $\alpha = \beta \circ \phi$.
In other words, lattice maps $L \rightarrow B$ extend along embeddings:
\[
\begin{tikzcd}
L \arrow[]{r}{\phi} \arrow{dr}[swap]{\forall\alpha} &		L' \arrow[dashed]{d}{\exists\beta}	\\
~ &		B
\end{tikzcd}
\]
\begin{cor}[\textbf{ZFC}]\label{finBAinj}
Every finite Boolean algebra is injective in the category of distributive lattices.
\end{cor}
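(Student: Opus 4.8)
The plan is to observe that almost all the work has already been done, and that what remains is essentially bookkeeping about the category of (bounded) distributive lattices. Recall that an object $B$ of this category is \emph{injective} precisely when, for every monomorphism $\phi : L \rightarrow L'$ and every morphism $\alpha : L \rightarrow B$, there is a morphism $\beta : L' \rightarrow B$ with $\beta \circ \phi = \alpha$. Since bounded distributive lattices form a variety of algebras, the monomorphisms in this category coincide with the injective homomorphisms; hence it suffices to produce such a $\beta$ whenever $\phi$ is an injective lattice map.

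Next I would record that, reasoning in $\mathbf{ZFC}$, equality on a finite Boolean algebra $B$ is decidable, so that $B$ is a finite discrete Boolean algebra and the development of Section \ref{sequentsformaps} applies: writing $\vdash$ and $\vdash'$ for the entailment relations on $L \times B$ and $L' \times B$ that present the lattice maps into $B$, the map $f_\phi : (L \times B, \vdash) \rightarrow (L' \times B, \vdash')$, $(x,a) \mapsto (\phi(x), a)$, is an interpretation, and by the preceding Proposition it is \emph{conservative}.

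Then I would invoke Theorem \ref{thm:conssurj}: conservativity of $f_\phi$ is equivalent to surjectivity of the induced map $f_\phi^{-1} : \mathfrak{Spec}(\vdash') \rightarrow \mathfrak{Spec}(\vdash)$ on ideal elements, which, as computed just above the statement, carries $\beta$ to $\beta \circ \phi$. Since the ideal elements of $\vdash$ are exactly the lattice homomorphisms $L \rightarrow B$, the given $\alpha$ is one such, and surjectivity produces $\beta \in \mathfrak{Spec}(\vdash')$, i.e.\ a lattice map $\beta : L' \rightarrow B$, with $f_\phi^{-1}(\beta) = \beta \circ \phi = \alpha$. This is the required extension, so $B$ is injective.

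The only genuinely non-trivial ingredients are the Proposition on conservation along embeddings and, behind it, the Formal Nullstellensatz (Corollary \ref{hilbert}); the two points to be careful about are the identification of monomorphisms in the category with injective homomorphisms, and the classical observation that finiteness of a Boolean algebra entails discreteness, so that the constructive machinery is available in the first place. Accordingly I do not expect a real obstacle here: the hard part has already been discharged by the elementary proof of conservativity, with $\mathbf{CT}$ entering only through Theorem \ref{thm:conssurj}.
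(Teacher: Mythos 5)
Your argument is exactly the paper's: the corollary is stated as an immediate consequence of the preceding Proposition (conservativity of $f_\phi$ for injective $\phi$) combined with Theorem \ref{thm:conssurj} and the identification of ideal elements with lattice maps, so that $f_\phi^{-1}(\beta) = \beta \circ \phi$ being surjective yields the extension. Your two additional remarks---that monomorphisms of bounded distributive lattices are the injective homomorphisms, and that finiteness gives discreteness in $\mathbf{ZFC}$---are correct and merely make explicit what the paper leaves tacit.
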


Notice that if $L$ is a sublattice of $L'$ and if $\phi$ denotes inclusion of the former,
then $f_\phi^{-1}$ is nothing but the restriction of lattice maps to the sublattice,
and the extension is conservative if and only if restriction is surjective.
But we have to emphasize again that this requires completeness!

It is well known \cite{balbes:proin,banaschewski:inhulls} that every injective distributive lattice is a Boolean algebra.
In Section \ref{section:constocomp} we will see that complements are necessary for conservation:
a finite distributive lattice which lacks a complement for at least one of its elements
cannot allow for a result analogous to Theorem \ref{hilbert},
and cannot be injective among distributive lattices.

\section{Notes on injective Heyting algebras}
%%%
%%%
%%%

A (bounded) lattice $L$ is said to be a \emph{Heyting algebra} if,
for every pair of elements $x,y \in L$,
there is an element $x \rightarrow y \in L$ such that,
for every $z \in L$,
\[
z \leq x \rightarrow y
\qquad
\text{if and only if}
\qquad
z \wedge x \leq y.
\]
It is well-known that any Heyting algebra is distributive \cite{johnstone:stsp}.
A homomorphism of Heyting algebras is a lattice homomorphism that preserves \emph{implication} ($\rightarrow$).
Every Boolean algebra $B$ is a Heyting algebra with $x \rightarrow y \equiv -x \vee y$.
If $L$ is a Heyting algebra, \emph{negation} is defined by $\neg x \equiv x \rightarrow 0$.
%Notice that $\neg x \wedge x = 0$.
A Heyting algebra $L$ is a Boolean algebra if and only if $\neg \neg x = x$ for every $x \in L$.
An element $x \in L$ is said to be \emph{regular} if $\neg \neg x = x$.
The set $L_{\neg\neg}$ of all regular elements of $L$
with the induced order is a Boolean algebra:
it is a sub-meet-semilattice of $L$, with joins defined by $x \vee_{L_{\neg\neg}} y \equiv \neg\neg( x \vee y )$.
We refer to \cite{johnstone:stsp}.
It is well known that
\emph{Booleanization}
\[
\neg \neg : L \rightarrow L_{\neg\neg},
\quad
x \mapsto \neg\neg x
\]
is a homomorphism of Heyting algebras \cite{balbes:injheyt}.

As shown by Balbes and Horn \cite{balbes:injheyt}, a Heyting algebra is injective (in the category of Heyting algebras)
if and only if it is a complete Boolean algebra.
The proof of this result employs Sikorski's theorem and argues with the Boolean algebra of regular elements
of a Heyting algebra.
We adopt the idea and consider the corresponding conservation result with regard to finite
discrete Boolean algebras.

To this end, let $L$ be a Heyting algebra and $B$ a finite discrete Boolean algebra.
The entailment relation of Heyting algebra morphisms $L \rightarrow B$
is inductively generated
by the set of all instances of the following axioms.
\begin{align*}
(x, a), (x, b) &\vdash \tag{s}\\
(x, a), (y, b) &\vdash (x \wedge y, a \wedge b) \tag{$\wedge$}\\
(x, a), (y, b) &\vdash (x \vee y, a \vee b) \tag{$\vee$}\\
(x, a), (y, b) &\vdash (x \rightarrow y, a \rightarrow b) \tag{$\rightarrow$}\\
&\vdash (0_L, 0_B) \tag{$0$}\\
&\vdash (1_L, 1_B) \tag{$1$}\\
&\vdash \Set{ (x,a) : a \in B} \tag{t}
\end{align*}
where $a \neq b$ in (s).

Thus $\vdash$ generates from the entailment relation of lattice maps $L \rightarrow B$
by adjoining additional axioms for implication.

Even though $L$ need not be Boolean itself,
the value of an arbitrary element $x \in L$ under a Heyting algebra homomorphism
$L \rightarrow B$ is determined by the value of $\neg \neg x$, and vice versa:

\begin{lem}\label{lem:dneg}
For every $(x,a) \in L \times B$ we have $(x,a) \dashv\vdash (\neg\neg x, a)$.
\end{lem}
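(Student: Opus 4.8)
The plan is to prove the two entailments $(x,a) \vdash (\neg\neg x, a)$ and $(\neg\neg x, a) \vdash (x, a)$ separately, each by establishing the corresponding inconsistencies via the ($\rightarrow$) axiom and working with negation $\neg x \equiv x \rightarrow 0$. The key observation is that in the Boolean algebra $B$ we have $\neg\neg a = a$ for every $a$, so the ($\rightarrow$) axiom, applied twice, should let us transport the value of $x$ to the value of $\neg\neg x$.

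First I would record the basic consequences of the ($\rightarrow$) axiom for negation. From ($\rightarrow$) with $y = 0_L$ and the axiom $\vdash (0_L, 0_B)$, cutting appropriately, one gets $(x,a) \vdash (\neg x, \neg a)$ for every $a \in B$. Iterating, $(x,a) \vdash (\neg\neg x, \neg\neg a)$, and since $B$ is Boolean, $\neg\neg a = a$, so $(x,a) \vdash (\neg\neg x, a)$. That is the first half. For the converse half, I would use Lemma~\ref{lem:hinher}: to show $(\neg\neg x, a) \vdash (x,a)$ it suffices to show $(\neg\neg x, a), (x, b) \vdash$ for every $b \neq a$ in $B$. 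From the first half applied to $x$ with value $b$, we have $(x,b) \vdash (\neg\neg x, b)$, hence $(\neg\neg x, a), (x, b) \vdash (\neg\neg x, b)$ by monotonicity, and then cutting against the axiom $(\neg\neg x, a), (\neg\neg x, b) \vdash$ from (s) (valid since $a \neq b$) yields $(\neg\neg x, a), (x, b) \vdash$. Lemma~\ref{lem:hinher} then delivers $(\neg\neg x, a) \vdash (x, a)$.

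The main obstacle I anticipate is getting the negation step $(x,a) \vdash (\neg x, \neg a)$ cleanly from the ($\rightarrow$) axiom, because ($\rightarrow$) in its stated form reads $(x,a),(y,b) \vdash (x \rightarrow y, a \rightarrow b)$ and one must specialize $y = 0_L$, $b = 0_B$ and then discharge the hypothesis $(0_L, 0_B)$ using the $(0)$ axiom via a cut — one should check that $a \rightarrow 0_B$ really is $-a = \neg a$ in $B$, which it is since $B$ is Boolean. A small point worth stating explicitly: in a Heyting algebra $\neg\neg\neg x = \neg x$, but we do not actually need $\neg\neg x$ to be regular; we only need that two applications of $\neg$ to the \emph{value} in $B$ return the original value, which holds because $B$ is Boolean. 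Everything else is routine use of (M), (T), and the already-proved Lemma~\ref{lem:hinher}, so the proof is short.

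Here is the proof.

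\begin{proof}
Specializing the axiom $(\rightarrow)$ to $y = 0_L$ and $b = 0_B$ gives $(x,a), (0_L, 0_B) \vdash (\neg x, -a)$, since $\neg x = x \rightarrow 0_L$ and $a \rightarrow 0_B = -a$ in the Boolean algebra $B$. Cutting against the axiom $\vdash (0_L, 0_B)$ yields
\[
(x,a) \vdash (\neg x, -a).
\]
Applying this once more with $\neg x$ in place of $x$ and $-a$ in place of $a$, and using $-(-a) = a$, we obtain
\[
(x,a) \vdash (\neg\neg x, a)
\]
for every $(x,a) \in L \times B$. This is one of the two entailments.

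For the converse, fix $(x,a)$ and let $b \in B$ with $b \neq a$. By the entailment just proved, applied with value $b$, we have $(x, b) \vdash (\neg\neg x, b)$, hence $(\neg\neg x, a), (x, b) \vdash (\neg\neg x, b)$ by monotonicity (M). Since $a \neq b$, axiom (s) gives $(\neg\neg x, a), (\neg\neg x, b) \vdash$, and cutting this with the previous entailment yields $(\neg\neg x, a), (x, b) \vdash$. As $b \neq a$ was arbitrary, Lemma~\ref{lem:hinher} applies and gives $(\neg\neg x, a) \vdash (x, a)$. Together with the first part this shows $(x,a) \dashv\vdash (\neg\neg x, a)$.
\end{proof}
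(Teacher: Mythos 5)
Your proof is correct and follows essentially the same route as the paper: the forward entailment is obtained by two applications of axiom $(\rightarrow)$ specialized to $0_L$, discharged against $\vdash (0_L,0_B)$ by cut, using $\neg\neg a = a$ in $B$; the converse is obtained by cutting the forward entailment (at value $b$) against (s) and then removing the remaining $(x,b)$'s via totality, which is exactly the content of Lemma~\ref{lem:hinher} that you invoke explicitly where the paper cuts (t) directly.
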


\begin{proof}
Both $(x,a), (0_L,0_B) \vdash (\neg x, \neg a)$ and $(\neg x, \neg a), (0_L, 0_B) \vdash (\neg \neg x, \neg \neg a)$
are instances of axiom $(\rightarrow)$.
By cut with $\vdash (0_L, 0_B)$, and since $\neg \neg a = a$ in $B$,
we get $(x,a) \vdash (\neg \neg x, a)$.
Once we have this entailment,
it follows that $(x,a), (\neg \neg x, b) \vdash$ for every $b \neq a$.
Therefore, we may cut (t) for $x$ accordingly,
and infer the converse entailment, too.
\end{proof}

Now let $\vdash'$ denote the entailment relation of lattice maps $L_{\neg\neg} \rightarrow B$,
generated as before without the axiom for implication.
Notice that double negation induces an interpretation
\[
(L \times B, \vdash) \rightarrow (L_{\neg\neg} \times B, \vdash),
\quad
(x,a) \mapsto (\neg\neg x, a).
\]
%In fact, generating axioms for $\vdash$ map to generating ones for $\vdash'$.
%As concerns $(\rightarrow)$,
%notice that we have
%\[
%(x,a), (y,b) \vdash' (-x \vee y, -a \vee b)
%\]
In the other direction we have an inclusion $L_{\neg \neg} \hookrightarrow L$
that preserves meets but not in general joins.
However, an axiom of the form
\[
(x,a), (y,b) \vdash' (x \vee_{L_{\neg\neg}} y, a \vee b)
\]
means
\[
(x,a), (y,b) \vdash' (\neg\neg(x \vee y), a \vee b)
\]
which can be inferred also with regard to $\vdash$,
by way of axiom $(\vee)$ and in view of Lemma \ref{lem:dneg}.
It follows that we have a conservative interpretation of entailment relations.
Employing Theorem \ref{hilbert},
we get the formal Nullstellensatz for this entailment relation.
Here is how to describe inconsistent sets explicitly:

\begin{cor}
For every finite subset $X$ of $L \times B$,
the following are equivalent.
\begin{enumerate}
\item
$X \vdash$
\item
There is an atom $e \in \mathrm{At}\,B$ such that
\[
\bigwedge_{a \geq e} \bigwedge \neg\neg X_a \leq
\bigvee_{a \geq e} \bigvee \neg\neg X_{-a}.
\]
\end{enumerate}
\end{cor}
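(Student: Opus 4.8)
The plan is to transport the Weak Nullstellensatz (Theorem \ref{hilbert:incons}) along the double-negation interpretation, using that the latter is conservative. I would write $g$ for the map
\[
g : (L \times B, \vdash) \longrightarrow (L_{\neg\neg} \times B, \vdash'),
\qquad
(x,a) \mapsto (\neg\neg x, a),
\]
which, by Lemma \ref{lem:dneg} and the discussion preceding the corollary, is an interpretation of entailment relations, and moreover a conservative one. The first step is to observe that, since $g(\emptyset) = \emptyset$, instantiating both defining implications (``interpretation'' and ``conservative'') with empty conclusion yields
\[
X \vdash
\qquad\text{if and only if}\qquad
g(X) \vdash'
\]
for every finite $X \subseteq L \times B$. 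So it suffices to describe the inconsistent sets for $\vdash'$.

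Next I would note that $\vdash'$ is exactly the entailment relation of Section \ref{sequentsformaps}, built over the distributive lattice $L_{\neg\neg}$ in place of $L$ (with the same $B$). Hence Theorem \ref{hilbert:incons} applies to it verbatim and gives that $g(X) \vdash'$ if and only if there is an atom $e \in \mathrm{At}\, B$ with
\[
\bigwedge_{a \geq e} \bigwedge {g(X)}_a
\leq
\bigvee_{a \geq e} \bigvee {g(X)}_{-a},
\]
where the meets, joins and order relation are those of $L_{\neg\neg}$. The last step is to rewrite this inequality: by definition ${g(X)}_a = \Set{ \neg\neg x : x \in X_a }$, the set abbreviated $\neg\neg X_a$ in the statement, and similarly for ${g(X)}_{-a}$; and since $L_{\neg\neg}$ is a sub-meet-semilattice of $L$ whose order is the restriction of that of $L$, the left-hand meet coincides with the one computed in $L$. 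Chaining the two equivalences then yields the claim, with, as throughout this part of the paper, no appeal to completeness.

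The one point I expect to require care is that last bit of bookkeeping --- keeping track of where the meets and joins live. On the left-hand side there is no issue, as meets of regular elements are regular and agree with those of $L$, and the order is inherited. On the right-hand side, however, a join of regular elements need not itself be regular, so the displayed inequality of item (2) must be read inside $L_{\neg\neg}$, equivalently with the outer join replaced by its double negation in $L$; reading every operation in $L$ instead would already break the implication from (1) to (2), since $m \leq \neg\neg j$ does not entail $m \leq j$ even for regular $m$. Apart from this, the proof is a routine unwinding of Theorem \ref{hilbert:incons} through the conservative interpretation $g$.
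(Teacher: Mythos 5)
Your proof is correct and follows essentially the same route as the paper: the corollary is obtained by transporting the Weak Nullstellensatz (Theorem \ref{hilbert:incons}), applied to the distributive lattice $L_{\neg\neg}$, through the conservative double-negation interpretation established in the preceding discussion. Your closing remark that the displayed join must be read in $L_{\neg\neg}$ (i.e.\ as the double negation of the join formed in $L$) is exactly the right reading of the statement and a worthwhile point to make explicit.
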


The general Nullstellensatz for $\vdash$ derives from the description of inconsistent sets.
Conservation is an immediate consequence.
\section{From conservation to complements}\label{section:constocomp}
%%%
%%%
%%%

At the outset,
the way in which we have generated the entailment relation in Section \ref{sequentsformaps}
did not depend on the structure of $B$ as a Boolean algebra,
and might as well be carried out with any finite lattice $D$ instead.
At least the ideal elements would exactly be the lattice maps $L \rightarrow D$.
One might thus be tempted to question whether and to what extent
complements in $D$ are necessary at all
in order to allow for a corresponding conservation result.
Incidentally,
the entailment relation for $\mathbf{2}$-valued maps has an important application,
demonstrated in \cite[Theorem 11]{coquand:erdl}, which may be used to resolve this question:

\begin{prop}\label{genBA}
If $B, i : L \times \mathbf{2} \rightarrow B$ is the distributive lattice generated by $(L\times \mathbf{2}, \vdash)$,
then $B$ is a Boolean algebra and $L$ embeds in $B$.
\end{prop}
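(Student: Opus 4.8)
The plan is to exploit the Fundamental Theorem of Entailment Relations together with the Formal Nullstellensatz (Corollary \ref{hilbert}) specialised to $B = \mathbf{2}$. Recall that $B$ is the generated distributive lattice, with $i : L \times \mathbf{2} \to B$ the universal interpretation, so that by the defining property $(\ast)$ we have
\[
X \vdash Y
\qquad
\text{if and only if}
\qquad
\bigwedge_{z \in X} i(z) \leq \bigvee_{w \in Y} i(w).
\]
So every fact about $\vdash$ proved via the Nullstellensatz transfers to an inequality in $B$. The first step is to produce, for each $x \in L$, a candidate complement of $i(x,1)$ in $B$. Example \ref{hinher2} suggests the obvious guess: $i(x,0)$. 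Indeed, by that example we have $(x,1),(x,0) \vdash$ and $\vdash (x,1),(x,0)$; translating through $(\ast)$ these read $i(x,1) \wedge i(x,0) = 0_B$ and $i(x,1) \vee i(x,0) = 1_B$, i.e.\ $i(x,0) = -i(x,1)$. Since every element of $L \times \mathbf{2}$ is of the form $(x,0)$ or $(x,1)$, and the image of $i$ generates $B$ as a lattice, it suffices to observe that the complemented elements of any distributive lattice form a sublattice (a standard fact, or one can verify directly: $-(u \wedge v) = -u \vee -v$ and $-(u\vee v) = -u \wedge -v$ when all complements exist). Hence every element of $B$ is complemented, so $B$ is a Boolean algebra.

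For the embedding, the natural map is $j : L \to B$, $x \mapsto i(x,1)$. This is a lattice homomorphism: preservation of $\wedge$, $\vee$, $0$, $1$ follows by translating the axioms $(\wedge)$, $(\vee)$, $(0)$, $(1)$ (for $\mathbf{2}$, where $a = b = 1$ and $a\wedge b = a\vee b = 1$, and $0_{\mathbf 2}\vee 0_{\mathbf 2} = 0_{\mathbf 2}$ handles the bottom via $i(0_L,1)$... one must be a little careful here and check $i(0_L,1) = 0_B$; this follows from $(0_L,1) \vdash$, which holds by Lemma \ref{lem:nmonotonic} with $x = y = 0_L$, $a = 1 \nleq 0 = b$). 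Injectivity is exactly Example \ref{exa:orderreflection}: for $x,y \in L$, $x \leq y$ if and only if $(x,1) \vdash (y,1)$, which via $(\ast)$ says $i(x,1) \leq i(y,1)$, i.e.\ $j(x) \leq j(y)$. So $j$ reflects the order, hence is an order-embedding, hence injective (and a lattice embedding).

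The main obstacle — such as it is — is bookkeeping rather than depth: one must make sure that the candidate complement $i(x,0)$ is correctly identified for \emph{all} generators and then invoke closure of complemented elements under the lattice operations to conclude \emph{all} of $B$ is complemented, rather than just the image of $i$. The cited source \cite[Theorem 11]{coquand:erdl} presumably records precisely this argument, so the proof can be kept short by appealing to it, or reconstructed along the lines above. A clean writeup would: (i) verify $i(x,0) = -i(x,1)$ for every $x$ using Example \ref{hinher2} and $(\ast)$; (ii) note that the complemented elements form a sublattice containing all generators, hence all of $B$, so $B$ is Boolean; (iii) exhibit $j : x \mapsto i(x,1)$ as a lattice homomorphism and invoke Example \ref{exa:orderreflection} for injectivity.
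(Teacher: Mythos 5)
Your argument is correct, and it supplies a proof that the paper itself omits: Proposition \ref{genBA} is stated there with only a pointer to \cite[Theorem 11]{coquand:erdl}, and your reconstruction matches what that reference establishes. The key steps all check out: axiom (s) and axiom (t) translate under $(\ast)$ into $i(x,1)\wedge i(x,0)=0_B$ and $i(x,1)\vee i(x,0)=1_B$, so every generator is complemented; complemented elements of a distributive lattice form a (bounded) sublattice, and $B$ is lattice-generated by the image of $i$, so $B$ is Boolean; and Example \ref{exa:orderreflection} gives order-reflection, hence injectivity, of $x\mapsto i(x,1)$. One small slip: Lemma \ref{lem:nmonotonic} with $x=y=0_L$, $a=1$, $b=0$ yields $(0_L,1),(0_L,0)\vdash$, not $(0_L,1)\vdash$ outright; you need one further cut with the axiom $\vdash(0_L,0)$ to discharge the second statement (or simply invoke Example \ref{nullenundeinsen}, which records $(0_L,1_B)\vdash$ directly). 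With that trivial repair the write-up is complete.
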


Now let us say that a finite distributive lattice $D$ is \emph{conservative} in case the following holds:
if $L$ and $L'$ are distributive lattices,
$L$ being a sublattice of $L'$,
then $(L \times D, \vdash) \hookrightarrow (L' \times D, \vdash')$
is a conservative extension of entailment relations,
where $\vdash$ and $\vdash'$ are generated as in \ref{sequentsformaps},
with $D$ in place of $B$,
respectively.

\begin{prop}[\textbf{ZFC}]\label{consisbool}
%A finite distributive lattice is conservative if and only if it is a Boolean algebra.
For every finite distributive lattice $D$,
the following are equivalent.
\begin{enumerate}
\item
$D$ is complemented.
\item
$D$ is conservative.
\item
$D$ is injective.
\end{enumerate}
\end{prop}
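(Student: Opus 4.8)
The plan is to establish the cycle of implications $(1)\Rightarrow(2)\Rightarrow(3)\Rightarrow(1)$, reusing the machinery already set up for Boolean algebras. First I would observe that the implication $(1)\Rightarrow(2)$ is essentially the content of Section~\ref{section:extbycon}: the entire proof of the Weak Nullstellensatz (Theorem~\ref{hilbert:incons}) and the conservation proposition only ever used of $B$ that it is a \emph{finite discrete} distributive lattice admitting complements for all its elements---so that atoms behave well, every element is a join of atoms, and the passage from $b \nleq b'$ to an atom below $b \wedge -b'$ is available. Hence if $D$ is complemented (and, being finite, discrete), the same argument yields the formal Nullstellensatz with $D$ in place of $B$, and injective-style conservation of $(L\times D,\vdash)\hookrightarrow(L'\times D,\vdash')$ along any sublattice inclusion follows verbatim. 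I would simply remark that all the intervening lemmas go through unchanged.

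Next, $(2)\Rightarrow(3)$ is the argument already rehearsed for Corollary~\ref{finBAinj}, transported from $B$ to $D$: given conservativity of the extension along an embedding $\phi : L \hookrightarrow L'$, Theorem~\ref{thm:conssurj} (``Conservativity = Surjectivity'', in \textbf{ZFC}) tells us that the induced map $f_\phi^{-1} : \mathfrak{Spec}(\vdash') \to \mathfrak{Spec}(\vdash)$ is surjective; since ideal elements are exactly lattice maps into $D$ and $f_\phi^{-1}(\beta) = \beta \circ \phi$, this says precisely that every lattice map $L \to D$ extends along $\phi$ to a map $L' \to D$. That is injectivity of $D$ in the category of distributive lattices. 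This is where completeness (CT), hence a choice principle, enters, and should be flagged as such.

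The interesting direction, and the main obstacle, is $(3)\Rightarrow(1)$: from injectivity of the finite distributive lattice $D$ we must produce complements. Here I would invoke Proposition~\ref{genBA} together with the Fundamental Theorem of Entailment Relations. Consider the finite lattice $D$ itself, or rather embed the question into the $\mathbf{2}$-valued setting: take $L = D$ and let $L' = B$ be the distributive lattice generated by $(D \times \mathbf{2}, \vdash)$, which by Proposition~\ref{genBA} is a Boolean algebra into which $D$ embeds, say via $j : D \hookrightarrow B$. By injectivity of $D$ there is a lattice retraction $r : B \to D$ with $r \circ j = \id_D$. The point is then that a retract (in the category of distributive lattices) of a Boolean algebra is again complemented: if $a \in D$, then $j(a)$ has a Boolean complement $\neg j(a) \in B$, and I claim $r(\neg j(a))$ is a complement of $a$ in $D$. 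Indeed $a \wedge r(\neg j(a)) = r(j(a)) \wedge r(\neg j(a)) = r(j(a) \wedge \neg j(a)) = r(0_B) = 0_D$, and dually $a \vee r(\neg j(a)) = r(j(a) \vee \neg j(a)) = r(1_B) = 1_D$, using that $r$ preserves finite meets, joins, and the bounds. Hence $D$ is complemented. The one point that needs care---and is the real crux---is ensuring that $j : D \to B$ really is an embedding of \emph{bounded} distributive lattices with the retraction respecting bounds, i.e.\ that Proposition~\ref{genBA} is being applied in the bounded setting consistent with our standing conventions; granting that (as the cited \cite[Theorem~11]{coquand:erdl} provides), the retract argument closes the loop.
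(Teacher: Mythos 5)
Your proposal is correct and follows essentially the same route as the paper: $(1)\Rightarrow(2)$ by noting that a complemented finite distributive lattice is just a finite (discrete) Boolean algebra so the earlier conservation result applies, $(2)\Rightarrow(3)$ via Theorem~\ref{thm:conssurj}, and $(3)\Rightarrow(1)$ by embedding $D$ into the Boolean algebra of Proposition~\ref{genBA} and using injectivity to obtain a retraction. The only difference is cosmetic: where the paper simply appeals to the fact that a homomorphic image of a Boolean algebra is Boolean, you spell out the computation $r(\neg j(a))$ explicitly, which is a welcome elaboration rather than a divergence.
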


\begin{proof}
We have already seen that every finite (discrete)
Boolean algebra is conservative in the sense specified before,
and injectivity is a classical consequence of completeness.
On the other hand, let $D$ be a finite distributive lattice and suppose that it is injective among distributive lattices.
This $D$ can be considered a sublattice of a Boolean algebra $B$,
applying, for instance, Proposition \ref{genBA}.
%As $D$ is conservative, the inclusion
%\[
%(D\times D, \vdash) \hookrightarrow (B\times D, \vdash')
%\]
%is a conservative interpretation of entailment relations.
%In turn, by completeness,
%this leads over to extendability of lattice maps:
\[
\begin{tikzcd}
D \arrow[hook]{r}\arrow{rd}[swap]{\mathrm{id}_D}		&		B\arrow[dashed]{d}{\exists f}		\\
~		&		D
\end{tikzcd}
\]
By way of injectivity,
it follows that $D$ is the homomorphic image of a Boolean algebra,
whence Boolean itself.
%Notice that proving this implication does not require completeness.
\end{proof}

What goes wrong in case $D$ is not Boolean?
Towards an answer, Proposition \ref{consisbool} might not be considered all too helpful,
taking into account that its proof invokes the completeness theorem.
There is another,
more concrete argument,
which provides an explicit counterexample to conservation.

Once again, let $D$ be a finite distributive lattice, %\marginpar{if you want, we can keep only yours, it's much nicer, anyway}
and suppose that $d_0 \in D$ is not complemented.
We consider the lattice $\mathbf{2}^2 = \set{ (0,0), (0,1), (1,0), (1,1) }$.
Let $\vdash\, \subseteq \mathbf{2}^2 \times D$
be the entailment relation of $D$-valued lattice maps on $\mathbf{2}^2$,
generated by axioms as in Section \ref{sequentsformaps},
with $D$ in place of $B$.
Let
\[
X = \set{ ((0,0), 0_D), ((1,1), 1_D), ((0,1),d_0) }.
\]
Notice that for every $d \in D$ we have $X, ((1,0),d) \vdash$.
This is because $d_0$ is supposed not have a complement,
and therefore, for any $d \in D$,
either we have $d_0 \wedge d \neq 0$ or $d_0 \vee d \neq 1$.
Involving appropriate instances of $(\wedge)$ or $(\vee)$
as well as corresponding instances of (s),
we infer that for every $d \in D$ the set
$X, ((1,0),d)$ is inconsistent with respect to $\vdash$.
Then we instantiate (t), which reads
\[
\vdash \Set{ ((1,0),d) : d \in D },
\]
and by way of cut we get $X \vdash$.
However, this set $X$ is \emph{not} inconsistent for the entailment relation of $D$-valued maps
on the sublattice $\set{ (0,0), (0,1), (1,1) }$.
In fact, for this very entailment relation $X$ is an ideal element!
We conclude that a finite distributive lattice,
which lacks a complement for at least one of its elements,
cannot be conservative either.

%%%%
%%%%
%\subsection{Issues of (non-)constructivity}
%%%%
%%%%
%(...?)
%
%

%%%
%%%
%%%
\section{Conclusion}
%%%
%%%
%%%

For every distributive lattice it is possible to generate an entailment relation
the ideal elements of which are precisely the lattice maps on $L$ with values
in a given finite discrete Boolean algebra.
In this manner,
the assertion that any such map extends onto ambient lattices---itself
classically equivalent to the Boolean prime ideal theorem---gets logically described and turns into
a syntactical conservation theorem.

Conservation, in turn,
has an elementary constructive proof,
whence can be considered a constructive version of the classical extension theorem.
As put by Coquand and Lombardi \cite{coq:hidden-krull},
to have a constructive version of a classical theorem means
to have ``\emph{a theorem the proof of which is constructive,
which has a clear computational content, and from which we can recover the usual version
of the abstract theorem by an immediate application of a well classified non-constructive principle}.''
In our case, the non-constructive principle in question is a suitable form of the completeness theorem.

% We keep for future research a general investigation into the approach taken here,
% along with further case studies on injectivity.
% To this end, it may be necessary to employ a notion of entailment relation
% which allows for infinite sets of conclusions
% in order to force ideal elements to be total.

% All this being said, there remains an issue that was rightfully raised
% by one of the anonymous referees,
% and we should not leave it unmentioned.
Sikorski's extension theorem undoubtedly has interesting consequences in classical mathematics, e.g.,
by duality theory it leads over to Gleason's theorem,
characterizing the projective objects in the category of compact Hausdorff spaces
precisely as the extremally disconnected spaces \cite[Theorem 3.7]{johnstone:stsp}.
% and it is the principal tool for obtaining a representation theorem
% in the theory of monadic Boolean algebras \cite{halmos:monadic}.
Yet our conservation result begs the question:
does it allow to substitute applications
of its classical counterpart in a manner that maintains computational information?
 -- We do not know at this point.
The situation appears rather reminiscent of the Hahn-Banach theorem:
whether one can make computational use of the Hahn-Banach theorem itself seems not clear either \cite{coquand:dlhb}.

%%%
%%%
\section*{Acknowledgements}
%%%
%%%
%\marginpar{definitiv mehr}
% The authors wish to express their gratitude to Peter Schuster
% for his encouragement, support, and interest in the subject.

The research that has led to this paper
was carried out within the project ``Categorical localisation: methods and foundations'' (CATLOC)
funded by the University of Verona within the programme ``Ricerca di Base 2015'';
the related financial support is gratefully acknowledged.
The final version of this paper was prepared within the project
``A New Dawn of Intuitionism: Mathematical and Philosophical Advances'' (ID 60842)
funded by the John Templeton Foundation,
as well as within the project
``Dipartimenti di Eccellenza 2018-2022'' of the Italian Ministry of Education, Universities and Research (MIUR).\footnote{The opinions expressed in this publication are those of the authors and
do not necessarily reflect the views of the John Templeton Foundation.}
The authors are most thankful to the anonymous referees for their help- and insightful constructive critique
which has led to an improved presentation of this paper.
The authors would further like to express their sincere gratitude to Peter Schuster
for his advice, encouragement, and many discussions on the subject matter.

\end{document}